\documentclass[a4paper,12pt]{article}
\pagestyle{empty}
\pagestyle{plain}
\usepackage[latin1]{inputenc}
\usepackage{amsmath}
\usepackage{amsfonts}
\usepackage{amssymb}
\usepackage{cite}
\usepackage{amsthm}
\usepackage{makeidx}
\usepackage{graphicx}
\usepackage{mathrsfs,xcolor}
\usepackage{enumerate}
\usepackage{blkarray}
\usepackage{bm}
\usepackage{setspace}
\usepackage{algorithm,algorithmic,refcount}
\usepackage[left,pagewise, mathlines]{lineno}
\usepackage{tikz-cd} 
\usepackage{lineno}


\usepackage{tikz}
\usetikzlibrary{arrows,shapes,quotes}
\usetikzlibrary{patterns,decorations.pathreplacing}

\numberwithin{table}{section}
\numberwithin{equation}{section}

\theoremstyle{plain}
\newtheorem{theorem}{Theorem}
\newtheorem{proposition}{Proposition}

\theoremstyle{definition}
\newtheorem{definition}{Definition}
\newtheorem{example}{Example}

\usepackage{authblk}

\author[1,2]{\textbf{Lauro L. Fontanil}}
\author[2,3,4,5]{\textbf{Eduardo R. Mendoza}}
\author[2,*]{\\\textbf{Noel T. Fortun}}

\affil[1]{\small \textit{Institute of Mathematical Sciences and Physics, University of the Philippines,  Los Ba\~{n}os, Laguna 4031, Philippines}}
\affil[2]{ \textit{Mathematics and Statistics Department, De La Salle University, Manila  0922, Philippines}}
\affil[3]{\normalsize\textit{Center for Natural Sciences and Environmental Research, De La Salle University, Manila  0922, Philippines}}
\affil[4]{\normalsize\textit{Max Planck Institute of Biochemistry, Martinsried near Munich, Germany}}
\affil[5]{\normalsize\textit{Faculty of Physics, Ludwig Maximilian University, Munich 80539, Germany}}
\affil[*]{Corresponding author: \texttt{noel.fortun@dlsu.edu.ph}}

\title{\vspace{3.5cm}\textbf{A computational approach to concentration robustness in \\power law kinetic systems of \\Shinar-Feinberg type}}

\date{}

\begin{document}
\maketitle
\begin{abstract}
There have been recent theoretic results that provide sufficient conditions for the existence of a species displaying \textit{absolute concentration robustness} (ACR) in a power law kinetic (PLK) system. One such result involves the detection of ACR among networks of high deficiency by considering a lower deficiency subnetwork with ACR as a local property. In turn, this ``smaller" subnetwork serves as a ``building block" for the larger ACR-possessing network. Here, with this theorem as foundation, we construct an algorithm that systematically checks ACR in a PLK system. By slightly modifying the algorithm, we also provide a procedure that identifies \textit{balanced concentration robustness} (BCR), a weaker form of concentration robustness than ACR, in a PLK system.

\end{abstract}
\baselineskip=0.30in

\setstretch{1.1}

\section{Introduction}

In 2010, Shinar and Feinberg \cite{shin1} introduced the concept of \textit{absolute concentration robustness} (ACR) for a system endowed with mass action kinetics (MAK). A system possesses ACR for a particular species if for every positive steady state of the system, the concentration of that species does not change. Their important contribution is a theorem that identifies some sufficient conditions to guarantee the existence of ACR in a MAK system. 

Almost a decade after, Fortun et al. \cite{fort4} showed that this result can be readily extended to  \textit{power law kinetic systems with reactant-determined interactions} (PL-RDK), a kinetic system more general than MAK. They called this result the Shinar-Feinberg Theorem on ACR  for PL-RDK systems (SFTACR). In a later paper, Fortun and Mendoza \cite{fort3} utilized the concept of dynamic equivalence (through the method discussed in \cite{naza}) and decomposition theory to come up with results motivated by SFTACR. In the same paper, the concept of \textit{balanced concentration robustness} (BCR), a weaker form of concentration robustness than ACR, was also introduced. Furthermore, the study established results that can detect ACR and BCR on some subclasses of PLK system which do not have any deficiency restriction unlike other earlier results on concentration robustness. As future outlook, the paper recommended to build computational approaches that could identify ACR and BCR in a PLK system based on these findings. The said recommendation serves as the motivation of this paper. 

We present here an algorithm that determines if a PLK system with a positive equilibrium and an \textit{independent decomposition} has ACR. We also present an analogous algorithm which can determine if a PLK system having an \textit{incidence independent decomposition} and a complex balanced equilibrium possesses BCR. These algorithms are based on Propositions \ref{3} and \ref{4} (Propositions 8 and 9 in \cite{fort3}). For an ACR or BCR to occur in a PLK-system, each of these propositions require the existence of a building block (which is defined here as a \textit{Shinar-Feinberg} (SF)-\textit{type subnetwork}) satisfying some conditions. We call the algorithms the \textit{building block algorithms for} PLK \textit{systems of} SF-\textit{type}.

These building block algorithms are first of their kind. However, the attempt to analyze a network's concentration robustness computationally is not entirely new. For instance, Kuwahara et al. \cite{kuwa} developed a computer tool which can construct new biochemical networks, endowed with MAK, by combinatorially interchanging user-specified species of existing biochemical networks. The tool was called ACRE which stands for Absolute Concentration Robustness Exploration since it is also capable of recognizing which among the constructed networks have ACR.

The algorithms we have developed employ a \textit{bottom-up} approach which works by initially considering an SF-type subnetwork where the desired decomposition is to be constructed. We chose this approach over a \textit{top-down} approach, which initially searches for an appropriate decomposition with the desired SF-type subnetwork, because the former approach makes the search for a suitable decomposition and subnetwork simpler and computationally cheaper. 



The paper is organized as follows. Fundamental concepts on chemical reaction networks, kinetic systems, and decomposition theory are presented in section 2. Concepts and related results on concentration robustness for PLK sytems are reviewed in section 3. In section 4, the structure of the algorithms as well as the procedures of the ACR building block algorithm are discussed. The BCR building block algorithm and examples are given in section 5. Section 6 provides a summary and an outlook.

\section{Fundamentals of chemical reaction networks and kinetic systems}
\subsection{Structure of chemical reaction networks}

We review in this section some necessary concepts and results on chemical reaction network, the details of which can be found in \cite{fein2, arce, toth, fort3}. 


\begin{definition} A \textbf{chemical reaction network} is a triple $\mathscr N = (\mathscr S, \mathscr C, \mathscr R)$ of three non-empty finite sets:
\begin{enumerate}
    \item a set \textbf{species} $\mathscr S$;
    \item a set $\mathscr C$ of \textbf{complexes}, which are nonnegative integer linear combinations of the species; and
    \item a set $\mathscr R \subseteq \mathscr C \times \mathscr C$ of reactions such that
    \begin{itemize}
        \item $(y,y)\notin \mathscr R$ for all $y\in \mathscr C$, and
        \item for each $y\in \mathscr C$, there exists a $y'\in \mathscr C$ such that $(y,y')\in \mathscr R$ or $(y',y)\in \mathscr R$.
    \end{itemize}
\end{enumerate}
\end{definition}

\noindent The nonnegative coefficients of the species in a complex are referred to as \textbf{stoichiometric coefficients}.  In this paper, a reaction $R_i=(y_i,y'_i)$ is also denoted by $R_i: y_i\rightarrow y'_i$ and $y_i$ and $y'_i$ are called the \textbf{reactant} and \textbf{product complexes} of $R_i$, respectively. Further, we reserve the letters $m, n$, and $r$ to denote the numbers of species, complexes, and reactions, respectively. The following example shows that a CRN can be represented by a digraph where the complexes and reactions serve as the digraph's vertices and arcs, respectively. 

\subsection*{Running example 1}

Consider the CRN represented by the digraph below. 

\begin{equation}
\begin{tikzcd}
2A+C \arrow[rd, "R_1"', shift right] &                                                                                                 & A+B \arrow[ld, "R_3"', shift right] \\
                                     & B \arrow[lu, "R_2"', shift right] \arrow[ru, "R_4"', shift right] \arrow[d, "R_6", shift left]  &                                     \\
                                     & B+F \arrow[u, "R_5", shift left] \arrow[ld, "R_8", shift left] \arrow[rd, "R_{10}", shift left] &                                     \\
2C+3D \arrow[ru, "R_7", shift left]  &                                                                                                 & D+E \arrow[lu, "R_9", shift left]  
\end{tikzcd}
\label{graph100}
\end{equation}

\noindent This CRN has $m=6, n=6$, and $r=10$. It also has the following sets:

\begin{align*} 
    \mathscr{S} &= \{ A,B,C,D,E,F \}; \\ 
    \mathscr{C} &= \{2A+C, A+B, B, B+F, 2C+3D, D+E\}; {\textnormal {and}}\\
    \mathscr{R} &= \{R_1, R_2, R_3, R_4, R_5, R_6, R_7, R_8, R_9, R_{10} \}.
\end{align*} 

We call the connected components of a CRN \textbf{linkage classes} whose number is denoted by $\ell$. The \textbf{strong linkage classes} refer to the strongly connected components and the \textbf{terminal strong linkage classes} are the strongly connected components without outgoing arrows. If each of the linkage classes of a CRN is also a strong linkage class, we call it \textbf{weakly reversible}. The CRN in Running example 1 is weakly reversible since it has only one linkage class which is also a strong linkage class. A complex is called \textbf{terminal} if it belongs to a terminal strong linkage class.

CRNs are studied with the aid of finite dimensional spaces $\mathbb R^{\mathscr S}$, $\mathbb R^{\mathscr C}$, and $\mathbb R^{\mathscr R}$, respectively referred to as \textbf{species space, complex space}, and \textbf{reaction space}. For every reaction $y\rightarrow y'$, we associate a vector, called \textbf{reaction vector}, resulted from subtracting the reactant complex $y$ from the product complex $y'$. The linear subspace of $\mathbb R^{\mathscr S}$ defined by $S:=\text{span}\{y'-y\in \mathbb R^{\mathscr S}|y\rightarrow y'\in \mathscr R\}$ is called the \textbf{stoichiometric subspace} $S$ of a CRN. Its dimension $s$ also refers to the \textbf{rank }of the CRN. A CRN can be characterized by a nonnegative integer $\delta$ called \textbf{deficiency} which is given by $\delta=n-\ell-s$. The CRN in Running example 1 has $S = \text{span}\{r_1, r_2, r_3, r_4, r_5\}$ where

\begin{center}
$r_1 = \begin{bmatrix} 2\\-1\\1\\0\\0\\0\\ \end{bmatrix}, r_2 = \begin{bmatrix} 1\\0\\0\\0\\0\\0\\ \end{bmatrix}, r_3 = \begin{bmatrix} 0\\0\\0\\0\\0\\1\\ \end{bmatrix}, r_4 = \begin{bmatrix} 0\\-1\\2\\3\\0\\-1\\ \end{bmatrix}, r_5 = \begin{bmatrix} 0\\-1\\0\\1\\1\\-1\\ \end{bmatrix}.$
\end{center}

\noindent Furthermore, its rank and deficiency is $s=5$ and $\delta=6-1-5=0$, respectively.


\begin{definition} Let $\mathscr N = (\mathscr S, \mathscr C, \mathscr R)$ be a CRN. The \textbf{incidence map} $I_a: \mathbb R^{\mathscr R}\rightarrow \mathbb R^{\mathscr C}$ is the linear map defined by mapping for each reaction $R_i: y_i\rightarrow y'_i\in \mathscr R$, the basis vector $\omega_i$ to the vector $\omega_{y'_i}-\omega_{y_i}\in \mathscr C$.
\end{definition}

\noindent As a linear map, the incidence map has an $n\times r$ matrix representation, called \textbf{incidence matrix}, whose entries are described by the following.

\begin{center}
    $(I_a)_{(i,j)}= \left\{ \begin{array}{rl}
    -1 & \mbox{if $i$ is the reactant complex of reaction $j \in \mathscr R,$} \\ 1 & \mbox{if $i$ is the product complex of reaction $j \in \mathscr R,$} \\ 0 & \mbox{otherwise.} \\
\end{array}\right.$
\end{center}

The CRN in Running example 1 has the following incidence matrix.

\begin{equation}
I_a = 
    \begin{blockarray}{*{10}{c} l}
        \begin{block}{*{10}{>{$\footnotesize}c<{$}} l}
            $R_1$ & $R_2$ & $R_3$ & $R_4$ & $R_5$ & $R_6$ & $R_7$  & $R_8$ & $R_9$ & $R_{10}$ \\
        \end{block}
        \begin{block}{[*{10}{c}]>{$\footnotesize}l<{$}}
            -1& 1 & 0 & 0 & 0 & 0 & 0 & 0 & 0 & 0 &  $2A+C$\\
            0 & 0& -1& 1 & 0 & 0 & 0 & 0 & 0 & 0 &  $A+B$\\
            1 & -1 & 1 & -1& 1 & -1& 0 & 0 & 0 & 0 &  $B$\\
            0 & 0 & 0 & 0 & -1& 1 & 1 & -1& 1 & -1&  $B+F$\\
            0 & 0 & 0 & 0 & 0 & 0 & -1& 1 & 0 & 0 &  $2C+3D$\\
            0 & 0 & 0 & 0 & 0 & 0 & 0 & 0 & -1& 1 &  $D+E$\\
        \end{block}
    \end{blockarray}
\label{eqn9}
\end{equation}

\noindent For the incidence matrix (of a directed graph), it is known that $\text{dim Im} I_a = n -\ell$ \cite{boros2013}.

\subsection{Dynamics of chemical reaction networks}

A \textbf{kinetics} is an assignment of a rate function to each reaction in a CRN. A network $\mathscr N$ together with a kinetics $K$ is called a \textbf{chemical kinetic system} (CKS) and is denoted here by $(\mathscr N,K)$. 
\textbf{Power law kinetics} (PLK) is identified by the \textbf{kinetic order matrix} which is an $r\times m$ matrix $F=[F_{ij}]$, and vector $k\in \mathbb R^{\mathscr R}_{>0}$, called the \textbf{rate vector}.

\begin{definition} A kinetics $K: \mathbb R^{\mathscr S}_{>0} \rightarrow \mathbb R^{\mathscr R}$ is a \textbf{power law kinetics} if

\begin{center}
    $K_i(x)=k_ix^{F_{i,\cdot}}$ for $i=1,\dots, r$
\end{center}

\noindent where $k_i\in \mathbb R_{>0}$, $F_{i,j} \in \mathbb R$, and $F_{i,\cdot}$ is the row of $F$ associated to reaction $R_i$.
\end{definition}

We can classify a PLK system based on the kinetic orders assigned to its \textbf{branching reactions} (i.e., reactions sharing a common reactant complex). 

\begin{definition}
A PLK system has \textbf{reactant-determined kinetics} (of type PL-RDK) if for any two branching reactions $R_i, R_j\in \mathscr R$, the corresponding rows of kinetic orders in $F$ are identical, i.e., $F_{ih}=F_{jh}$ for $h=1, \dots,m$. Otherwise, a PLK system has \textbf{non-reactant-determined kinetics} (of type PL-NDK).
\end{definition}

Suppose the CRN in Running example 1 is a PLK system with the following kinetic order matrix.

\begin{equation}
F =
  \begin{blockarray}{*{6}{c} l}
    \begin{block}{*{6}{>{$\footnotesize}c<{$}} l}
      $A$ & $B$ & $C$ & $D$ & $E$ & $F$ \\
    \end{block}
    \begin{block}{[*{6}{c}]>{$\footnotesize}l<{$}}
      2 & 0 & 1 & 0 & 0 & 0 & $R_1$ \\
      0 & 2 & 0 & 0 & 0 & 0 & $R_2$ \\
      1 &0.5& 0 & 0 & 0 & 0 & $R_3$ \\
      0 & 1 & 0 & 0 & 0 & 0 & $R_4$ \\
      0 & 1 & 0 & 0 & 0 & 1 & $R_5$ \\
      0 & 1 & 0 & 0 & 0 & 0 & $R_6$ \\
      0 & 0 & 2 & 3 & 0 & 0 & $R_7$ \\
      0 &0.5& 0 & 0 & 0 & 1 & $R_8$ \\
      0 & 0 & 0 & 1 & 1 & 0 & $R_9$ \\
      0 & 1 & 0 & 0 & 0 & 1 & $R_{10}$ \\
    \end{block}
  \end{blockarray}
  \label{graph300} 
\end{equation}

\noindent $R_2$ and $R_4$ are two branching reactions whose corresponding rows in $F$ (or \textbf{kinetic order vectors}) are not the same since $F_{22}=2\neq1=F_{42}$. Hence, the system is of type PL-NDK.

The well-known \textbf{mass action kinetic system} (MAK) forms a subset of PL-RDK systems. In particular, MAK is given by $K_i(x)=k_ix^{Y_{.,j}}$ for all reactions $R_i: y_i \rightarrow y'_i \in \mathscr R$ with $k_i\in \mathbb R_{>0}$ (called \textbf{rate constant}). The vector $Y_{.,j}$ contains the stoichiometric coefficients of a reactant complex $y_i\in \mathscr C$.


\begin{definition} The \textbf{species formation rate function} of a chemical kinetic system is the vector field

\begin{center}
    $f(c)=NK(c)=\displaystyle \sum_{y_i\rightarrow y'_i\in \mathscr R}K_i(c)(y'_i-y_i)$, where $c\in \mathbb R^{\mathscr S}_{\geq 0}$,
\end{center}
where $N$ is the $m\times r$ matrix, called \textbf{stoichiometric matrix},  whose columns are the reaction vectors of the system. \noindent The equation $dc/dt=f(c(t))$ is the \textbf{ODE or dynamical system} of the chemical kinetic system. An element $c^*$ of $\mathbb R^{\mathscr S}_{>0}$ such that $f(c^*)=0$ is called a \textbf{positive equilibrium} or \textbf{steady state} of the system. We use $E_+(\mathscr N,K)$ to denote the set of all positive equilibria of a CKS.
\end{definition}

Analogous to the species formation rate function, we also have the complex formation rate function.

\begin{definition} The \textbf{complex formation rate function} $g: \mathbb R^{\mathscr S}_{>0}\rightarrow \mathbb R^{\mathscr C}$ of a chemical kinetic system is the vector field

\begin{center}
    $g(c)=I_aK(c)=\displaystyle \sum_{y_i\rightarrow y'_i\in \mathscr R}K_i(c)(\omega_{y'_i}-\omega_{y_i})$, where $c\in \mathbb R^{\mathscr S}_{\geq 0}$.
\end{center}

\noindent where $I_a$ is the incidence map. A CKS is \textbf{complex balanced} if it has complex balanced steady state, i.e., there is a composition $c^{**}\in \mathbb R_{>0}^{\mathscr S}$ such that $g(c^{**} )=0$. We denote by $Z_+(\mathscr N,K)$ the set of all complex balanced steady states of the system.
\end{definition}



\begin{theorem}[Corollary 4.8, \cite{fein3}] \label{1.5}  If a CKS has deficiency 0, then its steady states are all complex balanced.
\end{theorem}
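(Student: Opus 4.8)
The plan is to exploit the standard factorization of the stoichiometric matrix through the complex space and to reinterpret deficiency as the dimension of an intersection of two subspaces. First I would introduce the linear map $Y\colon \mathbb R^{\mathscr C}\to\mathbb R^{\mathscr S}$ sending each basis vector $\omega_y$ to the complex $y$ viewed as its vector of stoichiometric coefficients. Then $N = Y I_a$, so the species formation rate function factors as $f = Y\circ g$, where $g = I_a K$ is the complex formation rate function; in particular, for every $c$ we have $f(c) = Y\!\left(g(c)\right)$ and $g(c)\in\operatorname{Im} I_a$.

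The core step is the linear-algebraic identity $\delta = \dim\!\left(\ker Y\cap\operatorname{Im} I_a\right)$. The ingredients are: $\dim\operatorname{Im} I_a = n-\ell$ (the fact recalled after \eqref{eqn9}, from \cite{boros2013}); $S = \operatorname{Im} N = Y\!\left(\operatorname{Im} I_a\right)$, which follows by unwinding $N = Y I_a$ against the definition $S = \operatorname{span}\{y'-y \mid y\to y'\in\mathscr R\}$, so that $s = \dim Y\!\left(\operatorname{Im} I_a\right)$; and the rank–nullity theorem applied to the restriction $Y|_{\operatorname{Im} I_a}$, which yields $\dim\operatorname{Im} I_a = \dim Y\!\left(\operatorname{Im} I_a\right) + \dim\!\left(\ker Y\cap\operatorname{Im} I_a\right)$. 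Combining these gives $\dim\!\left(\ker Y\cap\operatorname{Im} I_a\right) = (n-\ell)-s = \delta$.

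With this identity established the conclusion is a one-line chase. Assume $\delta = 0$, so $\ker Y\cap\operatorname{Im} I_a = \{0\}$. Let $c^*\in E_+(\mathscr N,K)$, i.e. $c^*\in\mathbb R^{\mathscr S}_{>0}$ with $f(c^*) = 0$. Then $g(c^*) = I_aK(c^*)\in\operatorname{Im} I_a$, while $Y\!\left(g(c^*)\right) = f(c^*) = 0$ places $g(c^*)\in\ker Y$; hence $g(c^*)\in\ker Y\cap\operatorname{Im} I_a = \{0\}$, so $g(c^*) = 0$ and $c^*$ is complex balanced. Since $c^*$ was arbitrary, $E_+(\mathscr N,K)\subseteq Z_+(\mathscr N,K)$, which is the assertion.

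The only genuine work lies in the identity $\delta = \dim\!\left(\ker Y\cap\operatorname{Im} I_a\right)$, and within it the potential obstacle is justifying $S = Y\!\left(\operatorname{Im} I_a\right)$ and $\dim\operatorname{Im} I_a = n-\ell$ carefully; but the former is immediate from $N = Y I_a$ and the definition of the stoichiometric subspace, and the latter is the cited classical fact on incidence matrices of digraphs, so neither is a real difficulty. Everything else is a single application of rank–nullity together with the subspace-membership argument above.
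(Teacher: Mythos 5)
Your argument is correct: the factorization $N=YI_a$, the identity $\delta=\dim\left(\ker Y\cap\operatorname{Im}I_a\right)$ obtained via rank--nullity on $Y|_{\operatorname{Im}I_a}$ together with $S=Y\left(\operatorname{Im}I_a\right)$ and $\dim\operatorname{Im}I_a=n-\ell$, and the membership chase $g(c^*)\in\ker Y\cap\operatorname{Im}I_a=\{0\}$ are all sound. The paper itself offers no proof (it simply cites Corollary 4.8 of Feinberg's lectures), and your argument is essentially the standard one behind that cited result, so there is nothing to fault.
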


\subsection{Decomposition theory}

We briefly discuss here two decomposition types which are important components of our findings.

\begin{definition} Let $\mathscr N = (\mathscr S, \mathscr C, \mathscr R)$ be a CRN. A \textbf{covering} of $\mathscr N$ is a collection of subsets \{$\mathscr R_1, \dots, \mathscr R_k$\} whose union is $\mathscr R$. A covering is called a \textbf{decomposition} of $\mathscr N$ if the sets $\mathscr R_i$ form a partition of $\mathscr R$.
\end{definition}

 $\mathscr R_i$ defines a subnetwork $\mathscr N_i$ of $\mathscr N$ where $\mathscr N_i=(\mathscr S_i, \mathscr C_i, \mathscr R_i)$ such that $\mathscr C_i$ consists of all complexes occurring in $\mathscr R_i$ and $\mathscr S_i$ has all the species occurring in $\mathscr C_i$. From here on, we will use $\{\mathscr N_1, \mathscr N_2, \dots, \mathscr N_k\}$ to denote the decomposition induced by $\mathscr R_i$'s. Also, for convenience, we will sometimes refer to $\mathscr N$, given that it has been ``decomposed'', as the parent network.

The idea of network decomposition originally came from Feinberg \cite{fein1}. He introduced a decomposition subclass called independent decomposition, which is identified based on the subnetworks' stoichiometric subspaces. 

\begin{definition}
A decomposition is \textbf{independent} if $S$ is the direct sum of the subnetworks' stoichiometric subspaces $S_i$ or equivalently, $s=s_1+\cdots+s_k$.
\end{definition}

Consider the decomposition $\{\mathscr N_1, \mathscr N_2, \mathscr N_3\}$ of the CRN in Running example 1 (see (\ref{graph200})). Notice that $\mathscr N_1, \mathscr N_2$, and $\mathscr N_3$ have $1, 1$, and $3$ as ranks, respectively. Since the parent network is of rank 5, the decomposition is independent.

\begin{equation}
\begin{tikzcd}
2A+C \arrow[d, shift left] & 2C+3D \arrow[d, shift left=2] & B \arrow[d, shift left] \arrow[r, shift left]   & A+B \arrow[l, shift left] \\
B \arrow[u, shift left]    & B+F \arrow[u]                 & B+F \arrow[u, shift left] \arrow[r, shift left] & D+E \arrow[l, shift left] \\
\mathscr N_1               & \mathscr N_2                  & \mathscr N_3                                    &                  \end{tikzcd}
\label{graph200}
\end{equation}

The following observation relates the images of the incidence maps of the parent network and its subnetworks.

\begin{proposition}[Proposition 6, \cite{fari}] \label{2}  Let $\{\mathscr N_1, \mathscr N_2, \cdots, \mathscr N_k\}$ be a covering of $\mathscr N$. Further, let $I_{a,i}$ be the incidence map of the subnetwork $\mathscr N_i$ with $n_i-\ell_i= \text{dim  }I_{a,i}$. Then,
\begin{enumerate} [i.]
    \item $\text{Im } I_a=\text{Im } I_{a,1}+\text{Im } I_{a,2}+\cdots+\text{Im } I_{a,k}$
    \item $n-\ell\leq (n_1-\ell_1) +(n_2-\ell_2)+\cdots+(n_k-\ell_k) $
\end{enumerate}
\end{proposition}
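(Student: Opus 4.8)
The plan is to work directly from the defining property of an incidence map: by the definition above, the map $I_a$ sends the basis vector $\omega_i$ of a reaction $R_i: y_i \to y_i'$ to $\omega_{y_i'} - \omega_{y_i}$, so $\text{Im } I_a = \text{span}\{\omega_{y_i'} - \omega_{y_i} \mid R_i \in \mathscr R\}$, and likewise $\text{Im } I_{a,i}$ is the span of the analogous reaction-difference vectors for the reactions in $\mathscr R_i$.

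First I would settle a small bookkeeping point that is really the only subtlety here. The incidence map $I_{a,i}$ of $\mathscr N_i$ a priori takes values in $\mathbb R^{\mathscr C_i}$, not in $\mathbb R^{\mathscr C}$, so the sum $\text{Im } I_{a,1} + \cdots + \text{Im } I_{a,k}$ in statement (i) does not even type-check until one fixes an identification. Since $\mathscr C_i \subseteq \mathscr C$, I would regard $\mathbb R^{\mathscr C_i}$ as the coordinate subspace of $\mathbb R^{\mathscr C}$ spanned by $\{\omega_y \mid y \in \mathscr C_i\}$, and correspondingly view $I_{a,i}$ as a map into $\mathbb R^{\mathscr C}$. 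Under this identification, for every reaction $R_j \in \mathscr R_i$ one has $I_{a,i}(\omega_j) = \omega_{y_j'} - \omega_{y_j}$, the very same vector that $I_a(\omega_j)$ equals; hence $\text{Im } I_{a,i} = \text{span}\{\omega_{y_j'} - \omega_{y_j} \mid R_j \in \mathscr R_i\}$ as a subspace of $\mathbb R^{\mathscr C}$.

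With this in hand, (i) is immediate: because $\{\mathscr R_1, \dots, \mathscr R_k\}$ is a covering, $\mathscr R = \mathscr R_1 \cup \cdots \cup \mathscr R_k$, so the spanning set $\{\omega_{y_i'} - \omega_{y_i} \mid R_i \in \mathscr R\}$ of $\text{Im } I_a$ is the union of the spanning sets of the $\text{Im } I_{a,i}$; and the span of a union of sets equals the sum of their spans, which gives $\text{Im } I_a = \text{Im } I_{a,1} + \cdots + \text{Im } I_{a,k}$. For (ii) I would simply take dimensions in (i): by the fact recalled just before the proposition, $\dim \text{Im } I_a = n - \ell$ and $\dim \text{Im } I_{a,i} = n_i - \ell_i$, while for any finite family of subspaces $\dim(V_1 + \cdots + V_k) \le \dim V_1 + \cdots + \dim V_k$ (again because a union of spanning sets spans the sum). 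Combining these yields $n - \ell \le (n_1 - \ell_1) + \cdots + (n_k - \ell_k)$. The only step needing any care is the identification in the first paragraph above; everything after it is routine linear algebra, so I do not expect a genuine obstacle.
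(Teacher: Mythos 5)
Your argument is correct: identifying $\mathbb R^{\mathscr C_i}$ with the coordinate subspace of $\mathbb R^{\mathscr C}$ spanned by $\{\omega_y \mid y \in \mathscr C_i\}$, noting that $\operatorname{Im} I_a$ is spanned by the vectors $\omega_{y'}-\omega_y$ over $\mathscr R = \mathscr R_1 \cup \cdots \cup \mathscr R_k$, and then taking dimensions with $\dim \operatorname{Im} I_a = n-\ell$ and subadditivity of dimension is exactly the standard route. The paper itself gives no proof (it imports the result from Farinas et al.\ as Proposition 6), and your write-up, including the care taken over the type-checking identification, matches the intended argument, so there is nothing to add.
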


Motivated by the idea of independent decomposition and Proposition \ref{2}, Farinas et al. \cite{fari} introduced the concept of incidence independent decomposition.

\begin{definition} A decomposition of a CRN $\mathscr N$ is \textbf{incidence independent} if and only if the image of the incidence map of $\mathscr N$ is the direct sum of the images of the incidence maps of the subnetworks or equivalently, $n-\ell = \sum (n_i-\ell_i)$.
\end{definition}

The decomposition of the CRN in Running example 1 whose subnetworks are given in (\ref{graph200}) is incidence independent since $n-\ell=5$, $n_1 -\ell_1=n_2-\ell_2=1$, and $n_3-\ell_3=3$.

Interestingly, incidence independent decompositions are more commonly observed in CRNs compared to independent decompositions. In particular, linkage class decomposition (i.e., a decomposition where the subnetworks are the linkage classes) is always incidence independent but not necessarily independent \cite{fari}.

\section{Concentration robustness in PLK systems of Shinar-Feinberg type}

In this section, we first review concepts and results on concentration robustness in PLK systems. We then introduce a running example to illustrate the different steps of the algorithm in the following sections. 

\subsection{Review of concentration robustness in PLK systems of Shinar-Feinberg type}

Details of the concepts and results reviewed below can be found in Fortun and Mendoza \cite{fort3}.

\begin{definition}
A PLK system $(\mathscr N, K)$ has \textbf{absolute concentration robustness} in a species $X\in \mathscr S$ if there exists $c^*\in E_+(\mathscr N, K)$ and for every other $c^{**}\in  E_+(\mathscr N, K)$, we have $c^*_X=c^{**}_X$.
\end{definition}

Balanced concentration robustness (BCR), a concentration robustness that is weaker than ACR, was also introduced in \cite{fort3}.  

\begin{definition}
A complex balanced CKS $(\mathscr N, K)$ has \textbf{balanced concentration robustness} in a species $X \in \mathscr S$ if $X$ has the same value for all $c\in Z_+(\mathscr N, K)$.
\end{definition}

If a system has ACR in a species, then it also has BCR for that species. The converse, however, of this observation is not always true.


\begin{definition}
A pair of reactions in a PLK system is called a \textbf{Shinar-Feinberg pair} (or SF-pair) in a species $X$ if their kinetic order vectors differ only in $X$. A network that contains an SF-pair is called a \textbf{Shinar-Feinberg type} (SF-type).
\end{definition}


A reactant complex in a PL-NDK system of some branching reactions is said to be an \textbf{NDK node} if the kinetic order vectors of the branching reactions are different. A \textbf{complex factorizable} (CF)-\textbf{subset} of a reactant complex is a subset of the branching reactions whose kinetic order vectors are the same. 

\begin{definition}
A PL-NDK system containing a single NDK node which has two CF-subsets, at least one of which contains only one reaction, is called \textbf{minimally PL-NDK}.
\end{definition}

The next proposition enables us to determine if a PLK system has ACR. Unlike other similar results on ACR \cite{shin1, fort3, fort4}, this proposition does not have any deficiency restriction imposed on the underlying network.  This allows us to deal with higher deficiency systems including higher deficiency MAK systems. 

\begin{proposition}[Proposition 8, \cite{fort3}] \label{3}  Let $(\mathscr N, K)$ be a {\normalfont{PLK}} system with a positive equilibrium and an independent decomposition $\{\mathscr N_1, \mathscr N_2, \cdots, \mathscr N_k\}$. If there is an $\mathscr N_i$ with $(\mathscr N_i, K_i)$ of {\normalfont{SF-type}} in $X\in \mathscr S$ such that 

\begin{enumerate} [i.]
    \item $\delta_i=0$ and is {\normalfont{PL-RDK}} or minimally {\normalfont{PL-NDK}}, or
    \item $\delta_i=1$ and is {\normalfont{PL-RDK}}.
\end{enumerate}

\noindent Then $(\mathscr N, K)$ has {\normalfont{ACR}} in $X$.
\end{proposition}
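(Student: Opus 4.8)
The plan is to reduce the ACR assertion for $(\mathscr N, K)$ to the same assertion for the single distinguished subnetwork $(\mathscr N_i, K_i)$, exploiting independence of the decomposition to control how positive equilibria of the parent network relate to those of the pieces.

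First I would invoke the decomposition description of the equilibrium set. For an arbitrary decomposition $\{\mathscr N_1,\dots,\mathscr N_k\}$ the species formation rate functions add, $f = f_1 + \cdots + f_k$ (since the $\mathscr R_j$ partition $\mathscr R$), and $f_j(c)$ always lies in the stoichiometric subspace $S_j$; hence $\bigcap_{j} E_+(\mathscr N_j,K_j) \subseteq E_+(\mathscr N,K)$ with no hypotheses. When the decomposition is independent, $S = S_1 \oplus \cdots \oplus S_k$, so $f(c^*) = \sum_j f_j(c^*) = 0$ together with $f_j(c^*)\in S_j$ forces $f_j(c^*) = 0$ for every $j$, giving the reverse inclusion; thus
\[
E_+(\mathscr N,K) \;=\; \bigcap_{j=1}^{k} E_+(\mathscr N_j,K_j)
\]
(the standard splitting of the positive equilibrium set under an independent decomposition; see, e.g., \cite{fein1}). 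Since $(\mathscr N,K)$ has a positive equilibrium, $E_+(\mathscr N,K)$ is nonempty, and therefore so is $E_+(\mathscr N_i,K_i) \supseteq E_+(\mathscr N,K)$.

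Next I would prove that $(\mathscr N_i,K_i)$ by itself has ACR in $X$. Fix the SF-pair $R_a, R_b \in \mathscr R_i$ in $X$, so the rows $F_{a,\cdot}$ and $F_{b,\cdot}$ of the kinetic order matrix differ only in the $X$-coordinate, by some nonzero amount $q$. For any $c^* \in E_+(\mathscr N_i,K_i)$ the power-law form gives $K_a(c^*)/K_b(c^*) = (k_a/k_b)\,(c^*_X)^{q}$, so it suffices to show that the flux ratio $K_a(c^*)/K_b(c^*)$ is the same for every positive equilibrium of $(\mathscr N_i,K_i)$: then $c^*_X$ is pinned to one value, which is ACR in $X$. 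This constancy is exactly what the Shinar--Feinberg-type ACR theorems deliver in each case of the hypothesis. In case (ii), $(\mathscr N_i,K_i)$ is a deficiency-one PL-RDK system with a positive equilibrium, so SFTACR \cite{fort4} applies and yields ACR in $X$ directly. In case (i) with PL-RDK, $\delta_i = 0$ and Theorem~\ref{1.5} make every positive equilibrium of $(\mathscr N_i,K_i)$ complex balanced; for a complex-factorizable kinetics the complex-balanced equilibria satisfy a Kirchhoff (matrix--tree) identity that fixes all flux ratios within a linkage class in terms of the rate constants alone, giving the required constancy. In case (i) with minimally PL-NDK, I would first realize $(\mathscr N_i,K_i)$ as a PL-RDK system with the same dynamics by resolving its single NDK node into its CF-subsets (one of which is a single reaction), along the lines of the dynamic-equivalence construction used in \cite{fort3, naza}; this leaves $E_+(\mathscr N_i,K_i)$ unchanged and retains an SF-pair in $X$, after which the preceding PL-RDK argument applies.

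Finally, for any $c^*, c^{**} \in E_+(\mathscr N,K)$, the displayed equality places both in $E_+(\mathscr N_i,K_i)$, and the ACR of $(\mathscr N_i,K_i)$ just established gives $c^*_X = c^{**}_X$; since $E_+(\mathscr N,K) \neq \emptyset$, this is ACR of $(\mathscr N,K)$ in $X$. I expect the middle step to be the main obstacle, and within it the minimally PL-NDK subcase: one must verify that the PL-RDK realization neither alters the positive equilibrium set of the subnetwork nor destroys the SF-pair (and, in case (ii), that any non-terminality-type condition required by SFTACR is met). The independence step is essentially bookkeeping once the equilibrium-set decomposition is in hand.
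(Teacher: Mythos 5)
You should first note that this paper never proves Proposition \ref{3}: it is imported verbatim from \cite{fort3}, so there is no in-paper proof to compare with, only the route the source (and the underlying Shinar--Feinberg-type theorems) must take. Your architecture is exactly that route: the relation $E_+(\mathscr N,K)=\bigcap_j E_+(\mathscr N_j,K_j)$ for an independent decomposition (with the unconditional inclusion $\bigcap_j E_+(\mathscr N_j,K_j)\subseteq E_+(\mathscr N,K)$ and independence giving the converse), then standalone ACR of the distinguished subnetwork, then transfer to the parent because its equilibria form a nonempty subset of the subnetwork's. The first and last steps are correct as written.

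The middle step, however, is not closed, and the points you defer at the end are genuine gaps rather than routine checks. (i) For $\delta_i=1$, the SFTACR of \cite{fort4}, like the original Shinar--Feinberg theorem, requires the reactant complexes of the SF-pair to be \emph{nonterminal}; this is not implied by ``SF-type'' as defined in this paper, and it can genuinely fail: in Running example 2 the chosen subnetwork $\mathscr N_1$ is weakly reversible, so all of its complexes are terminal, and a direct computation (e.g.\ all rate constants equal to $1$) exhibits the two-parameter family of positive equilibria $(a,a,c,c,ac)$ of $(\mathscr N_1,K_1)$ in which the concentration of R2 varies, i.e.\ no standalone ACR in R2; so case (ii) cannot be discharged by citing SFTACR under the stated hypotheses alone. (ii) For $\delta_i=0$, your matrix--tree argument fixes monomial (hence flux) ratios only for reactant complexes lying in the \emph{same linkage class}: the kernel of the Laplacian has one positive tree-constant generator per linkage class, with a free positive scale factor between classes. ``Within a linkage class'' is therefore an additional hypothesis, not a consequence of having an SF-pair; for instance the deficiency-zero mass action network with reactions $Y\rightleftharpoons W$ and $X+Y\rightleftharpoons V$ contains an SF-pair in $X$, admits many positive equilibria, and has no ACR in $X$. (iii) In the minimally PL-NDK case, the dynamic-equivalence (CF-) transformation of \cite{naza} alters the network itself (the NDK node is split), so deficiency, terminality and linkage structure can change; that the transformed PL-RDK realization still satisfies a usable low-deficiency criterion, keeps the same positive equilibrium set, and retains an SF-pair is precisely what must be proved and is not automatic. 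In short, your outer framework is the intended one, but closing cases (i)--(iii) requires hypotheses (nonterminal reactant complexes, same linkage class, control of the transformed subnetwork) that neither your argument nor the statement as reproduced here supplies, so the proof as proposed does not yet establish the proposition.
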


The next proposition, which is analogous to Proposition \ref{3}, can determine the existence of BCR. 

\begin{proposition}[Proposition 9, \cite{fort3}] \label{4}  Let $(\mathscr N, K)$ be a {\normalfont{PLK}} system with a complex balanced equilibrium and an incidence independent decomposition $\{\mathscr N_1, \mathscr N_2, \cdots, \mathscr N_k\}$. If there is an $\mathscr N_i$ with $(\mathscr N_i, K_i)$ of {\normalfont{SF-type}} in $X\in \mathscr S$ such that 

\begin{enumerate} [i.]
    \item $\delta_i=0$ and is {\normalfont{PL-RDK}} or minimally {\normalfont{PL-NDK}}, or
    \item $\delta_i=1$ and is {\normalfont{PL-RDK}}.
\end{enumerate}

\noindent Then $(\mathscr N, K)$ has {\normalfont{BCR}} in $X$.
\end{proposition}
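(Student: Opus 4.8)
The plan is to mirror the strategy used for Proposition \ref{3}, replacing the role of independent decompositions and ordinary positive equilibria with incidence independent decompositions and complex balanced equilibria. First I would invoke the structural fact, recorded just before the statement, that a decomposition is incidence independent precisely when $\operatorname{Im} I_a = \bigoplus_i \operatorname{Im} I_{a,i}$, equivalently $n-\ell = \sum_i (n_i - \ell_i)$. The key consequence of this, which I would establish or cite from \cite{fort3, fari}, is a ``complex balancing decomposition'' property: if $c^{**}$ is a complex balanced equilibrium of the parent system $(\mathscr N, K)$, i.e.\ $I_a K(c^{**}) = 0$, then incidence independence forces $I_{a,i} K_i(c^{**}) = 0$ for each subnetwork $\mathscr N_i$ as well. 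Indeed, $0 = g(c^{**}) = \sum_i I_{a,i} K_i(c^{**})$, and since the summands lie in the independent (direct-sum) subspaces $\operatorname{Im} I_{a,i}$, each term must vanish separately. Hence every complex balanced equilibrium of the whole system restricts to a complex balanced equilibrium of each $(\mathscr N_i, K_i)$, and in particular $Z_+(\mathscr N, K) \subseteq Z_+(\mathscr N_i, K_i)$ for all $i$.

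With that reduction in hand, the second step is to show that the distinguished subnetwork $(\mathscr N_i, K_i)$ of SF-type in $X$ has ACR in $X$ on its set of complex balanced equilibria. Here I would appeal to the Shinar--Feinberg type results for PL-RDK and minimally PL-NDK systems of low deficiency (the SFTACR-style theorems from \cite{fort4} and \cite{fort3}): under either hypothesis (i) $\delta_i = 0$ with PL-RDK or minimally PL-NDK, or (ii) $\delta_i = 1$ with PL-RDK, the SF-pair in $X$ together with the deficiency bound yields that $x_X$ is pinned to a single value across all relevant equilibria. When $\delta_i = 0$, Theorem \ref{1.5} tells us that all steady states of $(\mathscr N_i, K_i)$ are complex balanced, so the ACR conclusion from the $\delta_i=0$ SF-theorem applies directly to $Z_+(\mathscr N_i, K_i)$; when $\delta_i = 1$, one argues on the complex balanced locus using the same monomial-ratio bookkeeping that drives the deficiency-one SF-argument, since complex balancing is a stronger condition than being a steady state. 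Either way, there is a constant $\xi_X$ with $c_X = \xi_X$ for every $c \in Z_+(\mathscr N_i, K_i)$.

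The final step is to assemble these: the parent system is assumed to have a complex balanced equilibrium, so $Z_+(\mathscr N, K) \neq \varnothing$; by step one, $Z_+(\mathscr N, K) \subseteq Z_+(\mathscr N_i, K_i)$; by step two, every element of the latter has $X$-coordinate equal to $\xi_X$; therefore every element of $Z_+(\mathscr N, K)$ has $X$-coordinate $\xi_X$, which is exactly the definition of BCR in $X$ for the complex balanced CKS $(\mathscr N, K)$.

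I expect the main obstacle to be the rigorous justification of step one for a merely incidence independent (as opposed to fully independent) decomposition, namely that complex balancing of the parent propagates to complex balancing of each subnetwork. The directness of the direct-sum argument makes it plausible, but one must be careful that $K_i$ is genuinely the restriction of $K$ to the reactions of $\mathscr N_i$ and that no complex of $\mathscr N$ is ``split'' in a way that breaks the additivity $g = \sum_i g_i$; checking that this bookkeeping goes through — and that the low-deficiency SF-theorems really do apply to the complex balanced locus rather than just to generic steady states — is where the real work lies. The rest is a short logical chain of set inclusions.
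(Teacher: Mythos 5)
The paper never proves Proposition \ref{4}; it is imported verbatim as Proposition 9 of \cite{fort3}, and your argument is essentially the one that result rests on: incidence independence splits $g=\sum_i g_i$ along the direct sum of the $\operatorname{Im} I_{a,i}$, so a complex balanced equilibrium of $(\mathscr N,K)$ is complex balanced for each subsystem, and the low-deficiency SF-type (SFTACR/deficiency-zero, minimally PL-NDK) results then fix the $X$-value on the distinguished subnetwork's equilibria, hence on $Z_+(\mathscr N,K)$. Your closing worry about whether those theorems apply to the complex balanced locus is moot: since $N=YI_a$, every complex balanced state is a positive equilibrium ($Z_+\subseteq E_+$), and the restricted complex balanced state supplies the positive equilibrium the subsystem theorems require, so ACR of $(\mathscr N_i,K_i)$ already pins $X$ on $Z_+(\mathscr N_i,K_i)\supseteq Z_+(\mathscr N,K)$.
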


For a PLK system to have the desired concentration robustness, Proposition \ref{3} requires the system to have a positive equilibrium and independent decomposition while Proposition \ref{4} requires it to have an incidence independent decomposition and a complex balanced equilibrium. In addition, both requires having a \textbf{building block} or an SF-type subnetwork, taken from the attained decomposition, which satisfies certain conditions. Notice that after ensuring that a system has the desired decomposition and equilibrium, the said propositions translate the problem of identifying ACR or BCR to a search for a building block, eventually shifting the focus of the analysis on this smaller subnetwork. 

\subsection*{Running example 2}

We consider the following MAK system representing the early STAT signaling network coupled with the receptor complex formation upon interferon (IFN) induction which was studied in \cite{oter} (see (\ref{eqn3001}) for the CRN and (\ref{graph3002}) for the kinetic order matrix). It has a positive equilibrium, but its deficiency is 3.

\begin{equation}
\begin{tikzcd}
\textrm{R1I} \arrow[r, "R_1", shift left]                              & \textrm{R1} \arrow[l, "R_2", shift left]                                      & \textrm{R2} \arrow[r, "R_3", shift left]                  & \textrm{R2I} \arrow[l, "R_4", shift left]       &                                                 \\
\textrm{R2+R1I} \arrow[d, "R_5", shift left]                           & \textrm{R}^*+\textrm{S2} \arrow[d, "R_9", shift left]                         & \textrm{R}^*\textrm{S2}^*+\textrm{S1} \arrow[d, "R_{12}"] & \textrm{S1}^*+\textrm{S1}^* \arrow[d, "R_{14}"] & \textrm{S1}^*+\textrm{S2}^* \arrow[d, "R_{16}"] \\
\textrm{R}^* \arrow[u, "R_6", shift left] \arrow[d, "R_7", shift left] & \textrm{R}^*\textrm{S2}^* \arrow[u, "R_{10}", shift left] \arrow[d, "R_{11}"] & \textrm{R}^*\textrm{S2}^*\textrm{S1}^* \arrow[d, "R_{13}"]  & \textrm{S1}^*\textrm{S1}^* \arrow[d, "R_{15}"]  & \textrm{S1}^*\textrm{S2}^* \arrow[d, "R_{17}"]  \\
\textrm{R1+R2I} \arrow[u, "R_8", shift left]                           & \textrm{R}^*+\textrm{S2}^*                                                    & \textrm{R}^*\textrm{S2}^*+\textrm{S1}^*                   & \textrm{S1+S1}                                  & \textrm{S1+S2}                                 
\end{tikzcd}
\label{eqn3001}
\end{equation}

\begin{equation}
F =
\scalebox{.85}{
  \begin{blockarray}{*{13}{c} l}
    \begin{block}{*{13}{>{$\footnotesize}c<{$}} l}
      \textrm{R1} & \textrm{R1I} & \textrm{R2} & \textrm{R2I} & \textrm{R}^* & \textrm{S1} & \textrm{S1}^* & \textrm{S2} & \textrm{S2}^* & \textrm{S1}^*\textrm{S1}^* & \textrm{S1}^*\textrm{S2}^* & \textrm{R}^*\textrm{S2}^* & \textrm{R}^*\textrm{S2}^*\textrm{S1}^* \\
    \end{block}
    \begin{block}{[*{13}{c}]>{$\footnotesize}l<{$}}
      0 & 1 & 0 & 0 & 0 & 0 & 0 & 0 & 0 & 0 & 0 & 0 & 0 & R_1 \\
      1 & 0 & 0 & 0 & 0 & 0 & 0 & 0 & 0 & 0 & 0 & 0 & 0 & R_2 \\
      0 & 0 & 1 & 0 & 0 & 0 & 0 & 0 & 0 & 0 & 0 & 0 & 0 & R_3 \\
      0 & 0 & 0 & 1 & 0 & 0 & 0 & 0 & 0 & 0 & 0 & 0 & 0 & R_4 \\
      0 & 1 & 1 & 0 & 0 & 0 & 0 & 0 & 0 & 0 & 0 & 0 & 0 & R_5 \\
      0 & 0 & 0 & 0 & 1 & 0 & 0 & 0 & 0 & 0 & 0 & 0 & 0 & R_6 \\
      0 & 0 & 0 & 0 & 1 & 0 & 0 & 0 & 0 & 0 & 0 & 0 & 0 & R_7 \\
      1 & 0 & 0 & 1 & 0 & 0 & 0 & 0 & 0 & 0 & 0 & 0 & 0 & R_8 \\
      0 & 0 & 0 & 0 & 1 & 0 & 0 & 1 & 0 & 0 & 0 & 0 & 0 & R_9 \\
      0 & 0 & 0 & 0 & 0 & 0 & 0 & 0 & 0 & 0 & 0 & 1 & 0 & R_{10} \\
      0 & 0 & 0 & 0 & 0 & 0 & 0 & 0 & 0 & 0 & 0 & 1 & 0 & R_{11} \\
      0 & 0 & 0 & 0 & 0 & 1 & 0 & 0 & 0 & 0 & 0 & 1 & 0 & R_{12} \\
      0 & 0 & 0 & 0 & 0 & 0 & 0 & 0 & 0 & 0 & 0 & 0 & 1 & R_{13} \\
      0 & 0 & 0 & 0 & 0 & 0 & 2 & 0 & 0 & 0 & 0 & 0 & 0 & R_{14} \\
      0 & 0 & 0 & 0 & 0 & 0 & 0 & 0 & 0 & 1 & 0 & 0 & 0 & R_{15} \\ 
      0 & 0 & 0 & 0 & 0 & 0 & 1 & 0 & 1 & 0 & 0 & 0 & 0 & R_{16} \\
      0 & 0 & 0 & 0 & 0 & 0 & 0 & 0 & 0 & 0 & 1 & 0 & 0 & R_{17} \\
    \end{block}
  \end{blockarray}}
  \label{graph3002} 
\end{equation}

Take the decomposition $\{\mathscr N_1, \mathscr N_2\}$ of the system where $\mathscr R_1=\{R_1, R_3, R_5\}$ and $\mathscr R_2=\{R_9, R_{11}, R_{12}, R_{13}, R_{14}, R_{16}\}$ (see (\ref{eqn2})). The decomposition is independent since $\mathscr N_1$ and $\mathscr N_2$ have, respectively, $3$ and $6$ as ranks while the given system is of rank $9$. Observe that $\mathscr N_1$ is an SF-type subnetwork for containing reactions $R_1$ and $R_{5}$ which are SF-pair in R2 since their kinetic order vectors differ only in that species. The subsystem $(\mathscr N_1, K_1)$ is of deficiency $\delta_1 = 7 - 3 -3 = 1$ and is clearly a PL-RDK implying that it has ACR in $R2$. 

\begin{equation}
\begin{tikzcd}
\mathscr N_1: & \textrm{R1I} \arrow[r, "R_1", shift left]                                     & \textrm{R1} \arrow[l, "R_2", shift left]                               & \textrm{R2} \arrow[r, "R_3", shift left]        & \textrm{R2I} \arrow[l, "R_4", shift left]       \\
              & \textrm{R2+R1I} \arrow[r, "R_5", shift left]                                  & \textrm{R}^* \arrow[l, "R_6", shift left] \arrow[r, "R_7", shift left] & \textrm{R1+R2I} \arrow[l, "R_8", shift left]    &                                                 \\
\mathscr N_2: & \textrm{R}^*+\textrm{S2} \arrow[d, "R_9", shift left]                         & \textrm{R}^*\textrm{S2}^*+\textrm{S1} \arrow[d, "R_{12}"]              & \textrm{S1}^*+\textrm{S1}^* \arrow[d, "R_{14}"] & \textrm{S1}^*+\textrm{S2}^* \arrow[d, "R_{16}"] \\
              & \textrm{R}^*\textrm{S2}^* \arrow[u, "R_{10}", shift left] \arrow[d, "R_{11}"] & \textrm{R}^*\textrm{S2}^*\textrm{S1}^* \arrow[d, "R_{13}"]               & \textrm{S1}^*\textrm{S1}^* \arrow[d, "R_{15}"]  & \textrm{S1}^*\textrm{S2}^* \arrow[d, "R_{17}"]  \\
              & \textrm{R}^*+\textrm{S2}^*                                                    & \textrm{R}^*\textrm{S2}^*+\textrm{S1}^*                                & \textrm{S1+S1}                                  & \textrm{S1+S2}                                 
\end{tikzcd}
\label{eqn2}
\end{equation}

\section{The ACR building block algorithm for PLK systems of Shinar-Feinberg type}

\subsection{General structure and flow of the algorithm}

Propositions \ref{3} and \ref{4} require four things in a PLK system to guarantee the existence of ACR or BCR. These are: (1) the system must have an equilibrium (positive/complex balanced); (2) the system must have a decomposition (independent/incidence independent); (3) the decomposition must have an SF-type subnetwork; and (4) the subnetwork must be a building block. These requirements are the key factors that we have considered in the development of the algorithms. The major stages of the algorithms dwell on the checking of these requirements (see Table \ref{stages}). 

\begin{table}[h]
\centering
\begin{tabular}{|l|l|lll}
\cline{1-2}
\multicolumn{1}{|r|}{\textbf{STAGE 1}} & \begin{tabular}[c]{@{}l@{}}Check if the PLK system has multiple\\ equilibria \end{tabular} &  &  &  \\  \cline{1-2}

\textbf{STAGE 2}                       & \begin{tabular}[c]{@{}l@{}}Check if the PLK system has an SF-type\\ subnetwork.\end{tabular}                                     &  &  &  \\ \cline{1-2}

\textbf{STAGE 3}                       & \begin{tabular}[c]{@{}l@{}}Check if the PLK system has an appropriate \\ decomposition induced by the SF-type subnetwork.\end{tabular}                                &  &  &  \\ \cline{1-2}

\textbf{STAGE 4}                       & \begin{tabular}[c]{@{}l@{}}Check if the SF-type subnetwork is a building \\  block.\end{tabular}                                    &  &  &  \\ \cline{1-2}
\end{tabular}
\caption{Stages of the algorithm.}
\label{stages}
\end{table}

Since the first stage includes equilibrium checking, we clarify that this paper does not focus on the checking of the capacity of a system to admit an equilibrium. Readers are advised to refer to papers such as \cite{fort1, fort2, hern} for discussions regarding this topic. Furthermore, in  this stage, we require the system to have an equilibrium that is not unique in the whole species space, otherwise the system will trivially have a concentration robustness in all species \cite{fort3}. 

With these stages in mind, we developed the algorithms employing a \textbf{bottom-up} (BU) approach. This approach works by initially identifying a subnetwork which contains an SF-pair beginning with the smallest possible such subnetwork, i.e., the subnetwork consisting only the SF-pairs. Finding every SF-pair is done by tracking the difference of every two kinetic order vectors in the system. A decomposition is constructed from the identified SF-type subnetwork. Once the desired decomposition has been obtained, the subnetwork is subjected to a series of tests (called \textbf{building block tests}) to determine if the necessary conditions enumerated in Propositions \ref{3} and \ref{4} are met. In case the chosen subnetwork does not give the desired output, another subnetwork is considered and tested until all the possible subnetworks containing the SF-pairs have been checked. 

Finding a suitable decomposition from of an SF-type subnetwork is not always an easy task. The challenge is even greater when the network system at hand is large. Nevertheless, a good algorithm needs to have an efficient mechanism in finding such decomposition. Apparently, it is not efficient to consider a decomposition with arbitrary number of subnetworks. Hence, we must consider a decomposition with the smallest possible number of subnetworks that will do the job. This led us to design the algorithms to just consider binary decompositions. This decision is further justified by the following reasons. First, a more refined decomposition is difficult to handle especially if we want to achieve efficiency. Second, note that we are only interested in the subnetwork which contains the SF-pair leaving the other subnetworks with no purpose in the analysis. So, having a decomposition with many subnetworks leads to some impractical labor loss. For the last reason, we turn our attention to the following results of Farinas et al. \cite{fari} on the coarsening of independent and incidence independent decompositions. The results are stated as follows. 

\begin{proposition}[Propositions 5 and 8, \cite{fari}] \label{5} If a decomposition is independent $($incidence independent$)$, then any coarsening of the decomposition is also independent $($incidence independent$)$. 
\end{proposition}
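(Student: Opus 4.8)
\textbf{Proof proposal for Proposition \ref{5}.}

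The plan is to prove both statements (the independent case and the incidence independent case) by reducing an arbitrary coarsening to the elementary operation of merging exactly two blocks, and then handling that elementary case directly from the definitions. Recall that a coarsening of a decomposition $\{\mathscr N_1, \dots, \mathscr N_k\}$ is obtained by partitioning the index set $\{1, \dots, k\}$ into groups and replacing each group of subnetworks by the subnetwork whose reaction set is the union of the reaction sets in that group. Since any such coarsening can be reached by a finite sequence of merges of two blocks at a time, it suffices to show: if $\{\mathscr N_1, \dots, \mathscr N_k\}$ is independent (resp. incidence independent) and we merge $\mathscr N_1$ and $\mathscr N_2$ into a single subnetwork $\mathscr N_{12}$ with $\mathscr R_{12} = \mathscr R_1 \cup \mathscr R_2$, then $\{\mathscr N_{12}, \mathscr N_3, \dots, \mathscr N_k\}$ is again independent (resp. incidence independent). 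The general statement then follows by induction on the number of merges.

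For the independent case, first I would observe that $S_{12} = S_1 + S_2$, since the stoichiometric subspace of $\mathscr N_{12}$ is spanned by the reaction vectors of $\mathscr R_1 \cup \mathscr R_2$, which is exactly the union of the spanning sets of $S_1$ and $S_2$. By hypothesis $S = S_1 \oplus S_2 \oplus S_3 \oplus \cdots \oplus S_k$; in particular the sum $S_1 + S_2$ is direct, so $\dim S_{12} = s_1 + s_2$, and moreover $S_{12}$ meets $S_3 + \cdots + S_k$ only in $\{0\}$ because the whole sum is direct. Hence $S = S_{12} \oplus S_3 \oplus \cdots \oplus S_k$, which is precisely the statement that the coarsened decomposition is independent (equivalently, $s = (s_1 + s_2) + s_3 + \cdots + s_k$). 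The incidence independent case runs in parallel using Proposition \ref{2}: part (i) of that proposition, applied to the covering $\{\mathscr N_1, \mathscr N_2\}$ of $\mathscr N_{12}$, gives $\operatorname{Im} I_{a,12} = \operatorname{Im} I_{a,1} + \operatorname{Im} I_{a,2}$, and the incidence independence of the original decomposition says $\operatorname{Im} I_a = \operatorname{Im} I_{a,1} \oplus \cdots \oplus \operatorname{Im} I_{a,k}$. The directness of this sum forces the sum $\operatorname{Im} I_{a,1} + \operatorname{Im} I_{a,2}$ to be direct (so $\dim \operatorname{Im} I_{a,12} = (n_1 - \ell_1) + (n_2 - \ell_2)$ by the dimension formula $\dim \operatorname{Im} I_{a,i} = n_i - \ell_i$) and to intersect $\operatorname{Im} I_{a,3} + \cdots + \operatorname{Im} I_{a,k}$ trivially, so $\operatorname{Im} I_a = \operatorname{Im} I_{a,12} \oplus \operatorname{Im} I_{a,3} \oplus \cdots \oplus \operatorname{Im} I_{a,k}$, giving incidence independence of the coarsening.

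I expect the only real subtlety, rather than a genuine obstacle, to be the bookkeeping around direct sums of more than two summands: one must be careful to use that a sum $V_1 + \cdots + V_k$ is direct if and only if each $V_j$ intersects the sum of the others trivially, and to extract from this the two facts needed (directness of $V_1 + V_2$, and trivial intersection of $V_1 + V_2$ with $V_3 + \cdots + V_k$). This is standard linear algebra, and it can be streamlined by simply comparing dimensions: independence of a decomposition is equivalent to the numerical identity $s = \sum_i s_i$ (resp. $n - \ell = \sum_i (n_i - \ell_i)$), and since merging two blocks always gives $\dim S_{12} \le s_1 + s_2$ with equality exactly when $S_1 \cap S_2 = \{0\}$ (and similarly for the incidence maps, using Proposition \ref{2}(ii)), the numerical identity for the parent network is preserved under the merge. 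Phrasing the argument through these dimension counts avoids manipulating the direct-sum decompositions explicitly and makes the induction essentially immediate.
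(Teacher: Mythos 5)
Your argument is correct. Note, however, that the paper itself offers no proof of this statement: it is imported verbatim from Farinas et al.\ \cite{fari} (their Propositions 5 and 8), so there is no in-paper argument to compare yours against. What you supply is the standard proof and it is complete: reducing an arbitrary coarsening to a sequence of pairwise merges is legitimate, the identification $S_{12}=S_1+S_2$ follows immediately from the definition of the stoichiometric subspace of the merged block, and in the incidence case Proposition \ref{2}(i) gives $\operatorname{Im} I_{a,12}=\operatorname{Im} I_{a,1}+\operatorname{Im} I_{a,2}$ provided all images are viewed, as that proposition implicitly does, as subspaces of $\mathbb R^{\mathscr C}$ of the parent network rather than of the smaller spaces $\mathbb R^{\mathscr C_i}$ -- worth stating explicitly, but not a gap. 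Your closing observation that the whole thing can be run purely numerically (since $S=S_{12}+S_3+\cdots+S_k$ always holds, the chain $s\le \dim S_{12}+\sum_{i\ge 3}s_i\le \sum_i s_i=s$ forces equality, and likewise for $n-\ell$ via Proposition \ref{2}(ii)) is the cleanest formulation and matches how the paper's definitions are phrased ($s=\sum s_i$, $n-\ell=\sum(n_i-\ell_i)$), so the induction is indeed immediate.
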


These results mean that a network possessing an independent/incidence independent decomposition also possesses a binary independent/incidence independent decomposition. In particular, given that a network $\mathscr N$ has an independent/incidence independent decomposition say $\{\mathscr N_1, \mathscr N_2, \cdots, \mathscr N_k \}$ where  $\mathscr N_1$ contains an SF-pair. Then, $\{\mathscr N_1, \mathscr N^* \}$, where $\mathscr N^*=\mathscr N_2 \cup \mathscr N_3 \cup \cdots \cup \mathscr N_k$, must also be an independent/incidence independent decomposition. These tell us that it does not matter if we consider a binary decomposition since the independence or incidence independence of a finer decomposition is inherited by it. 

It is also important to note that even if we can decompose a network in so many ways there is no guarantee that it possesses an independent/incidence independent decomposition. Take  network $\mathscr N = (\mathscr S, \mathscr C, \mathscr R)$ with $\mathscr S = \mathscr C = \{A, B\}$ and $\mathscr R = \{ A \rightarrow B, B\rightarrow A\}$ as an example. Observe that $\mathscr N$ can be only decomposed as $\{\mathscr N_1, \mathscr N_2\}$ where $\mathscr R_1 = \{A\rightarrow B\}$ and $\mathscr R_2 = \{B\rightarrow A\}$. However, this decomposition is not independent nor incidence independent. So, Propositions \ref{3} and \ref{4} can no longer handle this network. Consequently, this kind of CRN limits the scope of the algorithms. This implies that the algorithms may determine if a system has the desired concentration robustness but not the nonexistence of such.

On the other hand, a network may have so many independent or incidence independent decompositions. Unfortunately, it is not always the case that one decomposition gives everything we need to have an ACR or BCR. This means that a good algorithm must be capable of analyzing every possible independent/incidence decomposition a network may have. In appendix, an example which illustrates how one of our algorithms failed to show if the network has the desired concentration robustness is given. However, it also shows how it is able to generate and analyze every necessary binary decomposition of the network.

\subsection{Detailed step-by-step description of the ACR building block algorithm}
The following are the details on how we proceed with the ACR building block algorithm for PLK systems of Shinar-Feinberg type.

\begin{enumerate}
    \item 	The inputs are the network $\mathscr N$ and its kinetic order matrix $F$.
    \item 	From the inputs, vital information including the set of reaction vectors $\mathscr R^* = \{y'-y \in \mathbb R^{\mathscr S}|y\rightarrow y' \in \mathscr R\}$ and the network's rank will be collected (STEP 1).
    \item Every SF-pair will also be collected and stored (STEP 2).
    \item 	From an SF-pair $\{R_u, R_v\}$, corresponding vectors $r_u$ and $r_v$ will be obtained. Then, $\{r_u,r_v\}$ will be extended to a basis $B = \{r_u,r_v,r'_1,r'_2,\cdots, r'_p\}$ of the stoichiometric subspace $S$. The rank obtained in STEP 1 determines the number of elements of $B$ (STEPS 3-5).
    \item 	After setting $B_1=\{r_u,r_v\}$ and $B_2=\{r'_1,r'_2,\cdots,r'_p\}$, the occurrence of a decomposition will be tested by checking if $\mathscr R^* \subseteq span B_1 \cup span B_2$. A positive test result means that $B_1$ and $B_2$ generate all the reaction vectors of the system, i.e., a binary decomposition $\{\mathscr N_1, \mathscr N_2\}$ is obtained where $\mathscr N_1$ and $\mathscr N_2$ have the reactions which corresponds to the vectors in $spanB_1$ and $spanB_2$, respectively. $B_1 \cap B_2 = \emptyset$ and $B_1 \cup B_2=B$ ensure that the decomposition is independent (STEPS 6-7).
    \item 	If in case the initial $B_1$ and $B_2$ will not be able to generate the desired decomposition, some vectors in $B_2$ will be transferred to $B_1$ and the process of checking for a possible decomposition will be repeated. This method of transferring vectors from $B_2$ to $B_1$ aims to preserve the disjointness of $B_1$ and $B_2$ while maintaining that their union is $B$ (STEP 8).
    \item Once a decomposition has been established, the generated subsystem $(\mathscr N_1,K_1)$ undergoes the building block tests. If the testing does not yield a positive result, the process will be repeated (STEPS 9-12).
\end{enumerate}

\subsection{The ACR building block algorithm}

We now present the ACR building block algorithm for PLK systems of SF-type.

\begin{algorithm} [ht]
\begin{algorithmic}

    \caption{ACR building block algorithm} 
    \STATE \textbf{INPUT}
        \STATE  Network: $\mathscr N$ with positive equilibrium 
        \STATE  Kinetic order matrix: $F$
        \STATE {}

 \STATE \textbf{STEP 1} Determine the following information about $\mathscr N$ 
             \STATE  Set of reaction vectors: $\mathscr R^* = \{y'-y \in \mathbb R^{\mathscr S}|y\rightarrow y' \in \mathscr R\}$
             \STATE  Rank of $\mathscr N$ 
    \STATE {}

    \STATE \textbf{STEP 2} Find all SF-pairs         
            \IF {there is none} 
                \STATE exit the algorithm 
            \ELSE
                \STATE store these SF-pairs as $\{\{R_u, R_v\}\}$ and proceed to STEP 3 
            \ENDIF 
    \STATE {}

    \STATE \textbf{STEP 3} Choose SF-pair $\{R_u, R_v\}$ 
         \STATE {}

    \STATE \textbf{STEP 4} Get the vectors $r_u$ and $r_v$ corresponding to $R_u$ and $R_v$
         \STATE {}

    \STATE \textbf{STEP 5} Extend $\{r_u,r_v\}$ to a basis $B=\{r_u,r_v,r'_1,r'_2,\cdots,r'_p\}$ of $S$
         \STATE {}   
        
    \STATE \textbf{STEP 6} Set $B_1=\{r_u,r_v\}$ and $B_2=\{r'_1,r'_2,\cdots,r'_p\}$
         \STATE {}

    \STATE \textbf{STEP 7} Test if $\mathscr R^* \subseteq spanB_1 \cup spanB_2$
            \IF {$\mathscr R^* \subseteq spanB_1 \cup spanB_2$} 
                \STATE proceed to STEP 9 
            \ELSE
                \STATE proceed to STEP 8 
            \ENDIF 
    \STATE {}

\end{algorithmic}
\end{algorithm}

\begin{algorithm}[!htbp]
\begin{algorithmic}

    \STATE \textbf{STEP 8} Choose a $P\in \wp(\{r'_1,r'_2,\cdots,r'_p\})-\{\emptyset,\{r'_1,r'_2,\cdots,r'_p\}\}$ where\\ $\wp(\{r'_1,r'_2,\cdots,r'_p\})$ is the set of all subsets of $\{r'_1,r'_2,\cdots,r'_p\}$\\ 
             \IF {$P$ has not been chosen before } 
                \STATE set $B_1=\{r_u,r_v\}\cup P$ and $B_2=\{r'_1,r'_2,\cdots,r'_p\}-P$ and repeat STEP 7
            \ELSE
                \IF {there is another $P\in \wp(\{r'_1,r'_2,\cdots,r'_p\})-\{\emptyset,\{r'_1,r'_2,\cdots,r'_p\}\}$ to choose} 
                    \STATE repeat STEP 8
                \ELSE 
                    \IF{there is another SF-pair that can be chosen}
                        \STATE repeat STEP 3
                    \ELSE
                        \STATE exit the algorithm 
                    \ENDIF
                \ENDIF
            \ENDIF 
    \STATE {} 
  
    \STATE \textbf{STEP 9} Get the subnetwork $\mathscr N_1$ generated by the reactions whose vectors are contained in $spanB_1$
    \STATE {}

    \STATE \textbf{STEP 10} Determine the deficiency $\delta_1$ of $\mathscr N_1$\\ 
            \IF {$\delta_1\leq 1$} 
                \STATE proceed to STEP 11 
            \ELSE
                \STATE repeat STEP 8
            \ENDIF 
    \STATE {}

    \STATE \textbf{STEP 11} Test if $(\mathscr N_1,K_1 )$ is a PL-RDK \\ 
        \IF {$(\mathscr N_1,K_1 )$ is a PL-RDK} 
                \STATE a building block is found and the system has ACR
            \ELSE
                \STATE proceed to STEP 12
            \ENDIF 
    \STATE {}

    \STATE \textbf{STEP 12} Test if $(\mathscr N_1,K_1 )$ is a minimally PL-NDK and $\delta_1=0$ \\ 
        
            \IF {$(\mathscr N_1,K_1 )$ is a minimally PL-NDK and $\delta_1=0$ } 
                \STATE a building block is found and the system has ACR
            \ELSE
                \STATE repeat STEP 8
            \ENDIF 
    \STATE {}

\end{algorithmic}
\end{algorithm}

\pagebreak

The following proposition proves that the decomposition produced in STEP 9 is an independent decomposition. 

\begin{proposition} \label{6} The decomposition $\{\mathscr N_1,\mathscr N_2\}$ induced by $B_1$ and $B_2$ where $\mathscr R^* \subseteq spanB_1 \cup spanB_2$ is an independent decomposition.
\end{proposition}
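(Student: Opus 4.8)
The plan is to show the two conditions defining a decomposition (the $\mathscr R_i$ partition $\mathscr R$) and then the stoichiometric condition defining independence, namely $S = S_1 \oplus S_2$, equivalently $s = s_1 + s_2$. The crucial combinatorial fact I would establish first is that the assignment of a reaction $R$ to either $\mathscr N_1$ or $\mathscr N_2$ — based on whether its reaction vector lies in $\mathrm{span}\,B_1$ or in $\mathrm{span}\,B_2$ — is well defined, i.e. no reaction vector lies in both spans except the zero vector, which cannot occur since reaction vectors are nonzero. Because $B_1 \cup B_2 = B$ is a basis of $S$ and $B_1 \cap B_2 = \emptyset$, we have $\mathrm{span}\,B_1 \cap \mathrm{span}\,B_2 = \{0\}$: this is the standard fact that a basis splits into two blocks spanning complementary subspaces. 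Hence every reaction vector sits in exactly one of the two spans, so together with the hypothesis $\mathscr R^* \subseteq \mathrm{span}\,B_1 \cup \mathrm{span}\,B_2$ we get that $\mathscr R_1 = \{R : r_R \in \mathrm{span}\,B_1\}$ and $\mathscr R_2 = \{R : r_R \in \mathrm{span}\,B_2\}$ are disjoint and cover $\mathscr R$. That gives a genuine decomposition $\{\mathscr N_1, \mathscr N_2\}$.

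Next I would identify the stoichiometric subspaces of the subnetworks. By construction $S_i = \mathrm{span}\{r_R : R \in \mathscr R_i\} \subseteq \mathrm{span}\,B_i$, so $S_1 \cap S_2 \subseteq \mathrm{span}\,B_1 \cap \mathrm{span}\,B_2 = \{0\}$, giving that the sum $S_1 + S_2$ is direct. It remains to check $S_1 + S_2 = S$. The inclusion $S_1 + S_2 \subseteq S$ is clear since all reaction vectors lie in $S$. For the reverse inclusion, note $S = \mathrm{span}\,\mathscr R^*$ and every element of $\mathscr R^*$ lies in $\mathrm{span}\,B_1 \subseteq S_1 + S_2$ or in $\mathrm{span}\,B_2 \subseteq S_1 + S_2$ — wait, this needs a short argument, because a priori $\mathrm{span}\,B_1$ could be strictly larger than $S_1$ if some basis vector $r_u$ or $r_v$ or $r'_j$ were not itself a reaction vector generating a reaction in $\mathscr N_1$. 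Here I would use that $B$ was obtained by extending $\{r_u, r_v\}$, where $R_u, R_v$ are actual reactions, and that STEP 8 only ever moves whole basis vectors between the blocks; in fact the cleanest route is to invoke $s = s_1 + s_2$ directly: since $s_1 \le \dim \mathrm{span}\,B_1 = |B_1|$, $s_2 \le \dim \mathrm{span}\,B_2 = |B_2|$, and $|B_1| + |B_2| = |B| = s$, while on the other hand $s \le s_1 + s_2$ always holds for any decomposition (the reaction vectors of $\mathscr N$ are partitioned among the subnetworks, so $S \subseteq S_1 + S_2$), we are squeezed into $s = s_1 + s_2$. This chain of inequalities sidesteps the need to argue that each $B_i$ is literally a basis of $S_i$.

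The one genuinely delicate point — and the place I expect to spend the most care — is the claim $S \subseteq S_1 + S_2$, i.e. that $S$ is spanned by the reaction vectors that actually occur in the two subnetworks, which is immediate only once we know $\mathscr R_1 \cup \mathscr R_2 = \mathscr R$; so this reduces back to the partition claim of the first paragraph, and the whole proof hinges on the disjoint-span property of a partitioned basis. I would state that property as a one-line lemma (if $B$ is a basis and $B = B_1 \sqcup B_2$ then $\mathrm{span}\,B_1 \cap \mathrm{span}\,B_2 = \{0\}$ and $\mathrm{span}\,B_1 \oplus \mathrm{span}\,B_2 = \mathrm{span}\,B$), prove it by comparing coordinates in the basis $B$, and then assemble the three ingredients — well-definedness of the reaction-to-subnetwork assignment, directness of $S_1 + S_2$, and the dimension count $s = |B| = |B_1| + |B_2| \ge s_1 + s_2 \ge s$ — into the conclusion that $\{\mathscr N_1, \mathscr N_2\}$ is an independent decomposition.
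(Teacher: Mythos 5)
Your proposal is correct and follows the same skeleton as the paper's proof: because $B=B_1\sqcup B_2$ is a basis, $\mathrm{span}\,B_1\cap \mathrm{span}\,B_2=\{0\}$, so (reaction vectors being nonzero) each reaction is assigned to exactly one of $\mathscr N_1,\mathscr N_2$, and the hypothesis $\mathscr R^*\subseteq \mathrm{span}\,B_1\cup \mathrm{span}\,B_2$ gives the covering, hence a genuine decomposition. Where you diverge is in closing the independence step: the paper simply identifies the subnetworks' stoichiometric subspaces with $S_1=\mathrm{span}\,B_1$ and $S_2=\mathrm{span}\,B_2$ and declares $S_1+S_2=S$ a direct sum, whereas you observe that a priori $S_i$ could be a proper subspace of $\mathrm{span}\,B_i$ (the extension vectors of STEP 5 need not all be reaction vectors appearing in $\mathscr N_i$) and instead run the squeeze $s\le s_1+s_2\le |B_1|+|B_2|=|B|=s$ to get $s=s_1+s_2$. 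This dimension count is a mild but genuine refinement: it covers a point the paper leaves implicit, at no extra cost, and it works even when the basis extension is not drawn from $\mathscr R^*$; conversely, the paper's route is shorter when one tacitly assumes (as in all the worked examples) that the extension vectors are themselves reaction vectors. The only small item neither argument spells out is that $\mathscr R_2$ is nonempty, which follows because otherwise all reaction vectors would lie in $\mathrm{span}\,B_1$, forcing $S\subseteq \mathrm{span}\,B_1$ and contradicting that $B$ is a basis of $S$ with $B_2\neq\emptyset$.
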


\begin{proof}
Clearly, $B_1$ and $B_2$ are nonempty. Now, note that $\mathscr N_1$ and $\mathscr N_2$ are generated by the reactions whose vectors are contained in $S_1=spanB_1$ and $S_2=spanB_2$, respectively. Let $R$ be any reaction in $\mathscr N$. Since $\mathscr R^* \subseteq spanB_1 \cup spanB_2$, then the reaction vector $r$ of $R$ is contained in either $S_1$ or $S_2$. Suppose $r\in S_1$. Then, $R$ is a reaction in $\mathscr N_1$. Now, if $r\in S_2$, then $r_u,r_v,r'_1,r'_2,\cdots,r'_p$ are going to be linearly dependent violating the fact that $B$ is a basis for $S$. This means that $R$ cannot be a reaction in $\mathscr N_2$ suggesting that $\{\mathscr N_1,\mathscr N_2\}$ is a decomposition. Finally, since $B_1\cap B_2=\emptyset$ and $B_1 \cup B_2=B$, then $S_1+S_2=S$ is a direct sum implying that $\{\mathscr N_1,\mathscr N_2\}$ is an independent decomposition. 
\end{proof}

\begin{example} \textbf{(from Running example 2)} We illustrate the algorithm by showing how it can produce the independent binary decomposition of the system in Running example 2 and subject the SF-type subnetwork to the building block tests. \\

\noindent STEP 1: The set of all reaction vectors of the system is $\mathscr R^* = \{r_1,r_2,\cdots, r_{17}\}$ where  \\
$r_1 = \begin{bmatrix} 1\\-1\\0\\0\\0\\0\\0\\0\\0\\0\\0\\0\\0\\ \end{bmatrix},
r_3 = \begin{bmatrix} 0\\0\\-1\\1\\0\\0\\0\\0\\0\\0\\0\\0\\0\\ \end{bmatrix},
r_5 = \begin{bmatrix} 0\\-1\\-1\\0\\1\\0\\0\\0\\0\\0\\0\\0\\0\\ \end{bmatrix},
r_7 = \begin{bmatrix} 1\\0\\0\\1\\-1\\0\\0\\0\\0\\0\\0\\0\\0\\ \end{bmatrix},
r_9 = \begin{bmatrix} 0\\0\\0\\0\\-1\\0\\0\\-1\\0\\0\\0\\1\\0\\ \end{bmatrix},\\
r_{11} = \begin{bmatrix} 0\\0\\0\\0\\1\\0\\0\\0\\1\\0\\0\\-1\\0\\ \end{bmatrix}, 
r_{12} = \begin{bmatrix} 0\\0\\0\\0\\0\\-1\\0\\0\\0\\0\\0\\-1\\1\\ \end{bmatrix},  
r_{13} = \begin{bmatrix} 0\\0\\0\\0\\0\\0\\1\\0\\0\\0\\0\\1\\-1\\ \end{bmatrix},
r_{14} = \begin{bmatrix} 0\\0\\0\\0\\0\\0\\-2\\0\\0\\1\\0\\0\\0\\ \end{bmatrix},
r_{15} = \begin{bmatrix} 0\\0\\0\\0\\0\\2\\0\\0\\0\\-1\\0\\0\\0\\ \end{bmatrix},\\
r_{16} = \begin{bmatrix} 0\\0\\0\\0\\0\\0\\-1\\0\\-1\\0\\1\\0\\0\\ \end{bmatrix},
r_{17} = \begin{bmatrix} 0\\0\\0\\0\\0\\1\\0\\1\\0\\0\\-1\\0\\0\\ \end{bmatrix},
r_2 = -r_1, r_4 = -r_3, r_6 = -r_5, r_8 = -r_7, r_{10} = -r_9.$\\

\noindent The system is of rank 9.\\

\noindent STEPS 2-6: The system has seven SF-pairs namely, $\{R_2,R_8\}, \{R_1,R_5\},$ $ \{R_3,R_5\}, \{R_4,R_8\},$ $\{R_{10},R_{12}\}, \{R_6,R_9\},$ and $\{R_7,R_9\}$. We deliberately choose $R_1$ and $R_5$. These reactions correspond to $r_1$ and $r_5$. We extend $\{r_1, r_5\}$ to a basis $B = \{r_1, r_5, r_3, r_9, r_{11}, r_{12}, r_{13}, r_{14}, r_{16}\}$ of the stoichiometric subspace $S$ of the system. Now, we set $B_1 = \{r_1, r_5\}$ and $B_2 = \{r_3, r_9, r_{11}, r_{12}, r_{13}, r_{14}, r_{16}\}$. It can be verified that $r_{7} \notin spanB_1\cup spanB_2$, i.e. $\mathscr R^* \nsubseteq spanB_1 \cup spanB_2$. Hence, we go to STEP 8.\\

\noindent STEP 8: Note that $\wp(\{r_3, r_9, r_{11}, r_{12}, r_{13}, r_{14}, r_{16}\})$ has $2^8=256$ elements. So, we have $254$ other iterations of $B_1$ and $B_2$. We can see in the following table a few of these iterations. From the table, notice that we get a decomposition when  $B_1 = \{r_1,r_5,r_3\}$ and $B_2 = \{r_9, r_{11}, r_{12}, r_{13}, r_{14}, r_{16}\}$. We proceed to STEP 9.

\begin{table}[H]
\begin{tabular}{|c|c|c|}
\hline
$B_1$ & $B_2$ & Element(s) not contained \\ 
 &  & in $spanB_1 \cup spanB_2$ \\ \hline
$\{r_1,r_5,r_3\}$           & $\{r_9, r_{11}, r_{12}, r_{13}, r_{14}, r_{16}\}$ & (none) \\ \hline
$\{r_1,r_5,r_9\}$           & $\{r_3, r_{11}, r_{12}, r_{13}, r_{14}, r_{16}\}$ & $r_7,r_8,r_{17}$ \\ \hline
$\{r_1,r_5,r_{11}\}$        & $\{r_3, r_9, r_{12}, r_{13}, r_{14}, r_{16}\}$ & $r_7,r_8,r_{17}$ \\ \hline
$\{r_1,r_5,r_{12}\}$        & $\{r_3, r_9, r_{11}, r_{13}, r_{14}, r_{16}\}$ & $r_7,r_8,r_{15}, r_{17}$ \\ \hline
$\{r_1,r_5,r_{13}\}$        & $\{r_3, r_9, r_{11}, r_{12}, r_{14}, r_{16}\}$ & $r_7,r_8,r_{15}, r_{17}$ \\ \hline
$\{r_1,r_5,r_{14}\}$        & $\{r_3, r_9, r_{11}, r_{12}, r_{13}, r_{16}\}$ & $r_7,r_8,r_{15}, r_{17}$ \\ \hline
$\{r_1,r_5,r_{16}\}$        & $\{r_3, r_9, r_{11}, r_{12}, r_{13}, r_{14}\}$ & $r_7,r_8,r_{17}$ \\ \hline
$\{r_1,r_5,r_3,r_9\}$       & $\{r_{11}, r_{12}, r_{13}, r_{14}, r_{16}\}$ & $r_{17}$ \\ \hline
$\{r_1,r_5,r_3,r_{11}\}$    & $\{r_9, r_{12}, r_{13}, r_{14}, r_{16}\}$ & $r_{17}$ \\ \hline
$\{r_1,r_5,r_3,r_{12}\}$    & $\{r_9, r_{11}, r_{13}, r_{14}, r_{16}\}$ & $r_{15}, r_{17}$ \\ \hline
$\{r_1,r_5,r_3,r_{13}\}$    & $\{r_9, r_{11}, r_{12}, r_{14}, r_{16}\}$ & $r_{15}, r_{17}$ \\ \hline
$\{r_1,r_5,r_3,r_{14}\}$    & $\{r_9, r_{11}, r_{12}, r_{13}, r_{16}\}$ & $r_{15}$ \\ \hline
$\{r_1,r_5,r_3,r_{16}\}$    & $\{r_9, r_{11}, r_{12}, r_{13}, r_{14}\}$ & $r_{17}$ \\ \hline
\end{tabular}
\centering
\caption{Some iterations of $B_1$ and $B_2$}
\label{table1}
\centering
\end{table}

\noindent STEPS 9-11: From STEP 8, we get the subnetwork $\mathscr N_1$ induced by the reactions whose vectors are contained in $spanB_1$ (see (\ref{eqn2})). As mentioned, $\mathscr N_1$ has deficiency $1$ and the subsystem $(\mathscr N_1, K_1)$ is a PL-RDK, i.e., a building block is found and the system has ACR in R2.
\end{example}

\section{The BCR building block algorithm: essential modifications}

By modifying some steps of the ACR building block algorithm, we get the BCR building block algorithm. The following are the alterations we have made in the procedure of ACR building block algorithm. We first remove the step that determines the set of reaction vectors and the rank of the network. Replace this step with the step that determines the incidence matrix $I_a$ and its rank. We take advantage of the column space ColS$(I_a)$ of $I_a$ here since a basis of ColS$(I_a)$ provides a basis for the image of $I_a$. Then, we replace items 4 and 5, with the following.

\begin{itemize} 
    \item From an SF-pair $\{R_u, R_v\}$, corresponding column vectors $c_u$ and $c_v$ of $I_a$ will be obtained. Then, $\{c_u,c_v\}$ will be extended to a basis $B = \{c_u,c_v,c'_1,c'_2,\cdots, c'_q\}$ of  ColS$(I_a)$ (STEPS 3-5).
    \item 	Let $B_1=\{c_u,c_v\}$ and $B_2=\{c'_1,c'_2,\cdots,c'_q\}$ and check if $C \subseteq span B_1 \cup span B_2$, where $C$ is the set of all column vectors of $I_a$. This testing aims to check if $spanB_1$ and $spanB_2$ generate all the column vectors of $I_a$. A binary decomposition $\{\mathscr N_1, \mathscr N_2\}$ is obtained when the test result is positive. Note that  $\mathscr N_1$ and $\mathscr N_2$  consist of the reactions which are associated to the column vectors in $spanB_1$ and  $spanB_2$, respectively. Finally, note that $B_1 \cap B_2 = \emptyset$ and $B_1 \cup B_2=B$ ensure that the decomposition is incidence independent (STEPS 6-7).
\end{itemize}

Similar to Proposition \ref{6}, the following result tells us that STEP 9 of BCR building block algorithm produced an incidence independent decomposition. 

\begin{proposition} \label{7} The decomposition $\{\mathscr N_1,\mathscr N_2\}$ induced by $B_1$ and $B_2$ where $C \subseteq spanB_1 \cup spanB_2$ is an incidence independent decomposition.
\end{proposition}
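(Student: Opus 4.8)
The plan is to mirror almost verbatim the argument used for Proposition~\ref{6}, replacing stoichiometric subspaces by the images of incidence maps and reaction vectors by columns of the incidence matrix $I_a$. Recall that by Proposition~\ref{2}(i), for any covering the image of $I_a$ equals the sum of the images $I_{a,i}$ of the subnetworks, and that incidence independence is precisely the statement that this sum is direct, equivalently $n-\ell = \sum(n_i-\ell_i)$. Thus the whole task reduces to: (1) showing that $\{\mathscr N_1,\mathscr N_2\}$ is genuinely a decomposition (i.e.\ the reaction sets partition $\mathscr R$), and (2) showing that $\operatorname{ColS}(I_a)=\operatorname{span}B_1 \oplus \operatorname{span}B_2$.

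First I would fix notation: write $C_1=\operatorname{span}B_1$ and $C_2=\operatorname{span}B_2$, where $B_1=\{c_u,c_v\}\cup P$ and $B_2=\{c'_1,\dots,c'_q\}\setminus P$ for the subset $P$ selected in STEP~8, so that $B_1\cap B_2=\emptyset$ and $B_1\cup B_2 = B$ is a basis of $\operatorname{ColS}(I_a)=\operatorname{Im}I_a$. Since $B$ is a basis, $C_1\cap C_2=\{0\}$ and $C_1+C_2=\operatorname{Im}I_a$, i.e.\ the sum is direct. Next, for the partition claim: take any reaction $R\in\mathscr R$ with incidence-column vector $c$. By the STEP~7 hypothesis $C\subseteq C_1\cup C_2$, so $c\in C_1$ or $c\in C_2$; the subnetworks $\mathscr N_1,\mathscr N_2$ are defined to contain exactly those reactions whose columns lie in $C_1$, resp.\ $C_2$. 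The only thing to rule out is that some $c$ lies in \emph{both} $C_1$ and $C_2$, which would make the assignment ambiguous; but $C_1\cap C_2=\{0\}$ forces $c=0$, and a column of an incidence matrix is never the zero vector (it has a $-1$ in the reactant row and a $+1$ in the product row of $R$, which are distinct since $(y,y)\notin\mathscr R$). Hence every reaction is assigned to exactly one of $\mathscr N_1,\mathscr N_2$, so $\{\mathscr R_1,\mathscr R_2\}$ partitions $\mathscr R$ and $\{\mathscr N_1,\mathscr N_2\}$ is a decomposition.

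To conclude incidence independence I would invoke Proposition~\ref{2}(i): $\operatorname{Im}I_a = \operatorname{Im}I_{a,1} + \operatorname{Im}I_{a,2}$. Now $\operatorname{Im}I_{a,i}$ is spanned by the columns of $I_a$ belonging to $\mathscr N_i$, all of which lie in $C_i$ by construction; hence $\operatorname{Im}I_{a,i}\subseteq C_i$. Combined with $\operatorname{Im}I_a = C_1\oplus C_2$ and $\operatorname{Im}I_a = \operatorname{Im}I_{a,1}+\operatorname{Im}I_{a,2}$, a dimension count (or the directness of $C_1+C_2$) gives $\operatorname{Im}I_{a,i}= C_i$ and that $\operatorname{Im}I_a = \operatorname{Im}I_{a,1}\oplus\operatorname{Im}I_{a,2}$, which is exactly the definition of incidence independence.

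The one point requiring a little care — the analogue of the subtle step in the proof of Proposition~\ref{6} — is the argument that a reaction column cannot sit in $C_2$ and force a contradiction in the ``not a decomposition'' scenario, and more precisely that $\operatorname{Im}I_{a,i}\subseteq C_i$ rather than merely that each generating column lies in $C_i$; since $C_i$ is a linear subspace this containment of spans is automatic, so there is really no obstacle beyond bookkeeping. The main conceptual ingredient is Proposition~\ref{2}(i), without which one could not pass from ``each subnetwork's image sits inside $C_i$'' to ``the images recover all of $\operatorname{Im}I_a$''; everything else is the direct-sum structure of the basis $B$. I would therefore present the proof in the same compact style as Proposition~\ref{6}, emphasizing the replacement of $S$ by $\operatorname{Im}I_a$ and citing Proposition~\ref{2}(i) at the final step.
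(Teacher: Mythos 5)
Your proposal is correct and follows exactly the route the paper intends: it omits an explicit proof of Proposition~\ref{7}, stating only that it is ``similar to Proposition~\ref{6},'' and your argument is precisely that transposition, with the stoichiometric subspace replaced by $\operatorname{Im}I_a=\operatorname{ColS}(I_a)$ and reaction vectors replaced by nonzero incidence columns. Your added care (the zero-column exclusion and the remark that $\operatorname{Im}I_{a,i}\subseteq \operatorname{span}B_i$ with directness forced by the basis $B$) only tightens the paper's sketch; invoking Proposition~\ref{2}(i) is harmless though not strictly needed, since the column space is already the sum of the spans of the partitioned columns.
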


We now give the BCR building block algorithm for PLK systems of SF-type. 

\begin{algorithm} [H]
\begin{algorithmic}

    \caption{BCR building block algorithm}
    
    \STATE \textbf{INPUT}
        \STATE Network: $\mathscr N$ with complex balanced equilibrium 
        \STATE Kinetic order matrix: $F$
        \STATE {}
 
    \STATE \textbf{STEP 1} Determine the following information about $\mathscr N$ 
         \STATE Incidence matrix $I_a$
         \STATE Rank of $I_a$
         \STATE {}

\end{algorithmic}
\end{algorithm}

\begin{algorithm} [!htbp]
\begin{algorithmic}	
	
	\STATE \textbf{STEP 2} Find all SF-pairs 
	\IF {there is none} 
	\STATE exit the algorithm 
	\ELSE
	\STATE store these SF-pairs as $\{\{R_u, R_v\}\}$ and proceed to STEP 3 
	\ENDIF 
	\STATE {}

	\STATE \textbf{STEP 3} Choose SF-pair $\{R_u, R_v\}$ 
		\STATE {}

	\STATE \textbf{STEP 4} Get the column vectors $c_u$ and $c_v$ of $I_a$ corresponding to $R_u$ and $R_v$
		\STATE {}

	\STATE \textbf{STEP 5} Extend $\{c_u,c_v\}$ to a basis $B=\{c_u,c_v,c'_1,c'_2,\cdots,c'_q\}$ of ColS$I_a$
		\STATE {}   

	\STATE \textbf{STEP 6} Set $B_1=\{c_u,c_v\}$ and $B_2=\{c'_1,c'_2,\cdots,c'_q\}$.
		\STATE {}

\STATE \textbf{STEP 7} Test if $C \subseteq spanB_1 \cup spanB_2$.
\IF {$C \subseteq spanB_1 \cup spanB_2$} 
\STATE proceed to STEP 9 
\ELSE
\STATE proceed to STEP 8 
\ENDIF 
\STATE {}   

\STATE \textbf{STEP 8} Choose a $P\in \wp(\{c'_1,c'_2,\cdots,c'_p\})-\{\emptyset,\{c'_1,c'_2,\cdots,c'_q\}\}$ where\\ $\wp(\{c'_1,c'_2,\cdots,c'_p\})$ is the set of all subsets of $\{c'_1,c'_2,\cdots,c'_p\}$\\ 
\IF {$P$ has not been chosen before } 
\STATE set $B_1=\{c_u,c_v\}\cup P$ and $B_2=\{c'_1,c'_2,\cdots,c'_p\}-P$ and repeat STEP 7
\ELSE
\IF {there is another $P\in \wp(\{c'_1,c'_2,\cdots,c'_p\})-\{\emptyset,\{c'_1,c'_2,\cdots,c'_p\}\}$ to choose} 
\STATE repeat STEP 8
\ELSE 
\IF{there is another SF-pair that can be chosen}
\STATE repeat STEP 3
\ELSE
\STATE exit the algorithm 
\ENDIF
\ENDIF
\ENDIF 
\STATE {} 

\end{algorithmic}
\end{algorithm}    
    
\begin{algorithm}[h]
\begin{algorithmic}

    \STATE \textbf{STEP 9} Get the subnetwork $\mathscr N_1$ generated by the reactions which correspond to the column vectors of $I_a$ contained in $spanB_1$
         \STATE {}

    \STATE \textbf{STEP 10} Determine the deficiency $\delta_1$ of $\mathscr N_1$\\ 
            \IF {$\delta_1\leq 1$} 
                \STATE proceed to STEP 11 
            \ELSE
                \STATE repeat STEP 8
            \ENDIF 
        \STATE {}

        \STATE \textbf{STEP 11} Test if $(\mathscr N_1,K_1 )$ is a PL-RDK\\ 
        	\IF {$(\mathscr N_1,K_1 )$ is a PL-RDK} 
        \STATE a building block is found and the system has BCR
        	\ELSE
        \STATE proceed to STEP 12
        	\ENDIF 
        \STATE {}
        
        \STATE \textbf{STEP 12} Test if $(\mathscr N_1,K_1 )$ is a minimally PL-NDK and $\delta_1=0$\\ 
        	\IF {$(\mathscr N_1,K_1 )$ is a minimally PL-NDK and $\delta_1=0$ }
        \STATE a building block is found and the system has BCR
        	\ELSE
        \STATE repeat STEP 8
        	\ENDIF 
        \STATE {}
        
\end{algorithmic}
\end{algorithm}

\pagebreak

\begin{example} \textbf{(from Running example 1)} Consider the system given in Running example 1. See (\ref{graph100}) for the CRN and (\ref{graph300}) for the kinetic order matrix. It is a deficiency zero PL-NDK system which contains two NDK nodes: $B$ and $B+F$. It has a positive equilibrium according to Theorem 4 of \cite{fort2}. By Theorem \ref{1.5}, the equilibrium is also complex balanced. Using the BCR building block algorithm, we show that the system has BCR in $F$.\\

\noindent STEP 1: The incidence matrix, given in (\ref{eqn9}), is of rank 5.\\

\noindent STEPS 2-7: The system has six SF-pairs, namely, $\{R_2, R_4\}$, $\{R_2, R_6\}$, $\{R_4, R_5\}$, $\{R_4, R_{10}\}$, $\{R_5, R_8\}$, $\{R_8, R_{10}\}$. Being one of the two SF-pairs that corresponds to $F$, we pick $\{R_8, R_{10}\}$. We consider $c_8$ and $c_{10}$, the column vectors of $I_a$ that correspond to $R_8$ and $R_{10}$. We get the basis $B=\{c_8, c_{10}, c_1, c_3, c_5\}$ of ColS$(I_a)$ after extending $\{c_8, c_{10}\}$. Let $B_1=\{c_8,c_{10}\}$ and $B_2=\{c_1,c_3,c_5\}$. It can be noticed that $c_7,c_8,c_9,c _{10}\in spanB_1$ and $c_1, c_2, c_3, c_4, c_5, c_6 \in spanB_2$ suggesting that $spanB_1\cup spanB_2$ generates all the column vectors of $I_a$. Then, we proceed to STEP 9.\\

\noindent STEPS 9-12: Consider the subnetwork $\mathscr N_1$ (see \ref{eqn10}) generated by the reactions associated to the vectors in $spanB_1$. The deficiency of $\mathscr N_1$ is $\delta_1 = 3-2-1=0\leq 1$. However, it is PL-NDK having $B+F$ as the sole NDK node. This NDK node has two CF-subsets, namely, $\{R_8\}$ and $\{R_{10}\}$ where each contains exactly one reaction. Hence, $\mathscr N_1$ is a minimally PL-NDK which means that we found a building block and so the given system has BCR in $F$.
 
\begin{equation}
    \begin{tikzcd}
                                    & B+F \arrow[ld, "R_8", shift left] \arrow[rd, "R_{10}", shift left] &                                   \\
2C+3D \arrow[ru, "R_7", shift left] &                                                                    & D+E \arrow[lu, "R_9", shift left]
\end{tikzcd}
\label{eqn10}
\end{equation}
\end{example}

\vspace{.5cm}

\begin{example} The following PLK system was used as an example in \cite{mend}. 
\begin{equation}
    \begin{tikzcd}
2A \arrow[r, "R_1", shift left]       & A+C \arrow[l, "R_2", shift left] \arrow[r, "R_3", shift left] & 2C \arrow[l, "R_4", shift left] \\
2B \arrow[r, "R_5", shift left]       & B+D \arrow[l, "R_6", shift left] \arrow[r, "R_7", shift left] & 2D \arrow[l, "R_8", shift left] \\
2A+2E \arrow[r, "R_9", shift left]    & A+E \arrow[l, "R_{10}", shift left]                           &                                 \\
2C+2F \arrow[r, "R_{11}", shift left] & C+F \arrow[l, "R_{12}", shift left]                           &                                
    \end{tikzcd}
F =
\scalebox{.79}{
  \begin{blockarray}{*{6}{c} l}
    \begin{block}{*{6}{>{$\footnotesize}c<{$}} l}
      A & B & C & D & E & F \\
    \end{block}
    \begin{block}{[*{6}{c}]>{$\footnotesize}l<{$}}
      1 & 0 & 0 & 0 & 0 & 0 & R_1 \\
      1 & 0 & 1 & 0 & 0 & 0 & R_2 \\
      1 & 0 & 1 & 0 & 0 & 0 & R_3 \\
      0 & 0 & 1 & 0 & 0 & 0 & R_4 \\
      0 & 1 & 0 & 0 & 0 & 0 & R_5 \\
      0 & 1 & 0 & 1 & 0 & 0 & R_6 \\
      0 & 1 & 0 & 1 & 0 & 0 & R_7 \\
      0 & 0 & 0 & 1 & 0 & 0 & R_8 \\
      -1 & 0 & 0 & 0 & 1 & 0 & R_9 \\
      -1 & 0 & 0 & 0 & -1 & 0 & R_{10} \\
      0 & 0 & 1 & 0 & 0 & 1 & R_{11} \\
      0 & 0 & 1 & 0 & 0 & -1 & R_{12} \\
    \end{block}
  \end{blockarray}}
  \label{eqn6} 
\end{equation}

\noindent It has a complex balanced equilibrium and deficiency $\delta = 3$. Let us show using BCR building block algorithm that this system has a BCR in $C$. The system's incidence matrix $I_a$ of rank 6 is given by the following.
\begin{equation}
I_a = 
    \begin{blockarray}{*{12}{c} l}
        \begin{block}{*{12}{>{$\footnotesize}c<{$}} l}
            $R_1$ & $R_2$ & $R_3$ & $R_4$ & $R_5$ & $R_6$ & $R_7$  & $R_8$ & $R_9$ & $R_{10}$ & $R_{11}$ & $R_{12}$\\
        \end{block}
        \begin{block}{[*{12}{c}]>{$\footnotesize}l<{$}}
            -1& 1 & 0 & 0 & 0 & 0 & 0 & 0 & 0 & 0 & 0 & 0 & $2A$\\
            1 & -1& -1& 1 & 0 & 0 & 0 & 0 & 0 & 0 & 0 & 0 & $A+C$\\
            0 & 0 & 1 & -1& 0 & 0 & 0 & 0 & 0 & 0 & 0 & 0 & $2C$\\
            0 & 0 & 0 & 0 & -1& 1 & 0 & 0 & 0 & 0 & 0 & 0 & $2B$\\
            0 & 0 & 0 & 0 & 1 & -1& -1& 1 & 0 & 0 & 0 & 0 & $B+D$\\
            0 & 0 & 0 & 0 & 0 & 0 & 1 & -1& 0 & 0 & 0 & 0 & $2D$\\
            0 & 0 & 0 & 0 & 0 & 0 & 0 & 0 &-1 & 1 & 0 & 0 & $2A+2E$\\
            0 & 0 & 0 & 0 & 0 & 0 & 0 & 0 & 1 &-1 & 0 & 0 & $A+E$\\
            0 & 0 & 0 & 0 & 0 & 0 & 0 & 0 & 0 & 0 & -1& 1 & $2C+2F$\\
            0 & 0 & 0 & 0 & 0 & 0 & 0 & 0 & 0 & 0 & 1 & -1& $C+F$\\
        \end{block}
    \end{blockarray}
\label{eqn7}
\end{equation}

The system has the following SF-pairs: $\{R_1,R_2\}, \{R_1,R_3\}, \{R_2,R_4\}, \{R_3,R_4\}, \{R_5,R_6\},$ \\$\{R_5,R_7\}, \{R_6,R_8\}, \{R_7,R_8\}, \{R_9,R_{10}\}, \{R_{11},R_{12}\}$. We choose the pair $\{R_1,R_2\}$ since the kinetic order vectors of $R_1$ and $R_2$ differ only at $C$. The column vectors $c_1$ and $c_2$ of $I_a$ correspond to $R_1$ and $R_2$. Extending $\{c_1,c_2\}$ to a basis of ColS$(I_a)$, we get $B = \{c_1,c_3,c_5,c_7,c_9,c_{11}\}$. We let $B_1=\{c_1,c_2\}$ and $B_2 = \{c_3,c_5,c_7,c_9,c_{11}\}$. Observe that $spanB_1$ contains $c_1$ and $c_2$ while $spanB_2$ contains $c_3,c_4,c_5,\cdots,c_{12}$, i.e., $spanB_1\cup spanB_2$ contains all the column vectors of the matrix of $I_a$. Hence, we proceed by subjecting  the subnetwork $\mathscr N_1$ (see (\ref{eqn8})) generated by the reactions whose corresponding column vectors are contained in $spanB_1$ to the building block tests. 
\begin{equation}
    \begin{tikzcd}
        2A \arrow[r, "R_1", shift left] & A+C \arrow[l, "R_2", shift left]
    \end{tikzcd}
\label{eqn8}
\end{equation}
\noindent $\mathscr N_1$ has deficiency $\delta_1 = 2-1-1 = 0$  and clearly, $(\mathscr N_1,K_1)$ is a PL-RDK. So, a building block is found, and we conclude that the system has a BCR in $C$.
\end{example}

\section{Summary and outlook}

We summarize the results we have obtained and give some recommendations for further research.

 We have developed algorithms that enable us to identify ACR and BCR on PLK systems. The algorithms are based on Propositions 8 and 9 of \cite{fort3}. Each of the algorithms works by searching for an SF-type subnetwork taken from a suitable decomposition. This subnetwork undergoes several tests to determine if it is a building block.

 While the algorithms can determine the existence of a concentration robustness in a PLK system given that the system has all the required conditions, these  procedures are not capable of determining if a PLK system does not have a concentration robustness.

 ACR building block algorithm is also capable of finding every binary independent decomposition (such that one subnetwork of each decomposition contains a fixed pair of reactions) of a given network. Hence, as implied by Proposition \ref{5} (Proposition 5 in \cite{fari}), this algorithm also enables us to detect if a network does not have a more refined independent decomposition. In other words, if a binary independent decomposition does not exist, then an independent refinement of such decomposition also does not exist. Similarly, BCR building block algorithm can also be used to determine if a network does not have a refined incidence independent decomposition. These observations imply that the algorithms can be utilized as tools in analyzing decompositions of a network. In this regard, further studies could be done to achieve computational techniques that generate all independent and incidence independent decompositions of a network. 

 Each of the algorithms here generates decompositions by considering subsets of a power set. Since power set easily gets large, a computationally less expensive strategy in finding those subsets will improve the efficiency of the algorithms. 

 The creation of computer programs applying the algorithms is a good next step for this study. If successful, such project will be beneficial for those who study concentration robustness in PLK systems, or other systems that can be represented as PLK systems.

\vspace{0.5cm}
\noindent \textbf{Acknowledgement}
L. L. Fontanil extends his gratitude to the Department of Science and Technology-Science Education Institute (DOST-SEI), Philippines for supporting him through the ASTHRDP Scholarship grant.

\baselineskip=0.25in

\appendix

\section*{Appendix}

We discuss here a system where the ACR building block algorithm is silent. Consider the PLK system for R. Schmitz's preindustrial carbon cycle model which appeared as an example in \cite{fort2}. The system has a positive equilibrium and is of deficiency 0.

\begin{equation}
\nonumber
    \begin{tikzcd}
M_5 \arrow[dd, "R_{12}"'] \arrow[rd, "R_{11}", shift left] &                                                                                                   & M_2 \arrow[ld, "R_4"', shift right] \arrow[rd, "R_6"', shift right] \arrow[dd, "R_5"] &                                                                        \\
                                                           & M_1 \arrow[lu, "R_3", shift left] \arrow[ru, "R_1"', shift right] \arrow[rd, "R_2"', shift right] &                                                                                       & M_4 \arrow[ld, "R_{10}"', shift right] \arrow[lu, "R_9"', shift right] \\
M_6 \arrow[ru, "R_{13}"']                                  &                                                                                                   & M_3 \arrow[lu, "R_7"', shift right] \arrow[ru, "R_8"', shift right]                   &                                                                       
\end{tikzcd}
F =
\scalebox{.9}{
  \begin{blockarray}{*{6}{c} l}
    \begin{block}{*{6}{>{$\footnotesize}c<{$}} l}
      M_1 & M_2 & M_3 & M_4 & M_5 & M_6 \\
    \end{block}
    \begin{block}{[*{6}{c}]>{$\footnotesize}l<{$}}
      1 & 0 & 0 & 0 & 0 & 0 & R_1 \\
      1 & 0 & 0 & 0 & 0 & 0 & R_2 \\
      0.36&0& 0 & 0 & 0 & 0 & R_3 \\
      0 &9.4& 0 & 0 & 0 & 0 & R_4 \\
      0 & 1 & 0 & 0 & 0 & 0 & R_5 \\
      0 & 1 & 0 & 0 & 0 & 0 & R_6 \\
      0 & 0 &10.2&0 & 0 & 0 & R_7 \\
      0 & 0 & 1 & 0 & 0 & 0 & R_8 \\
      0 & 0 & 0 & 1 & 0 & 0 & R_9 \\
      0 & 0 & 0 & 1 & 0 & 0 & R_{10} \\
      0 & 0 & 0 & 0 & 1 & 0 & R_{11} \\
      0 & 0 & 0 & 0 & 1 & 0 & R_{12} \\
      0 & 0 & 0 & 0 & 0 & 1 & R_{13} \\
    \end{block}
  \end{blockarray}}
  \label{eqn4000} 
\end{equation}

\noindent STEP 1: The set of reaction vectors is given by $\mathscr R^*=\{r_1, r_2, \cdots, r_{13}\}$ where  \\
$r_1 = \begin{bmatrix} -1\\1\\0\\0\\0\\0\\ \end{bmatrix},
r_2 = \begin{bmatrix} -1\\0\\1\\0\\0\\0\\ \end{bmatrix},
r_3 = \begin{bmatrix} -1\\0\\0\\0\\1\\0\\ \end{bmatrix},
r_4 = -r_1, 
r_5 = \begin{bmatrix} 0\\-1\\1\\0\\0\\0\\ \end{bmatrix},
r_6 = \begin{bmatrix} 0\\-1\\0\\1\\0\\0\\ \end{bmatrix},\\
r_7 = -r_2,$
$r_8 = \begin{bmatrix} 0\\0\\-1\\1\\0\\0\\ \end{bmatrix},
r_9 = -r_6,
r_{10} = -r_8,
r_{11} = -r_3,
r_{12} = \begin{bmatrix} 0\\0\\0\\0\\-1\\1\\ \end{bmatrix},\\
r_{13} = \begin{bmatrix} 1\\0\\0\\0\\0\\-1\\ \end{bmatrix}$

\noindent STEPS 2-7: The system has five SF-pairs, but we only choose the pair $\{R_4,R_5\}$. Reactions $R_4$ and $R_5$ correspond to $r_4$ and $r_5$. We get a basis $B = \{r_4,r_5,r_3,r_8,r_{12}\}$ of the stoichiometric subspace after extending $\{r_4,r_5\}$. We let $B_1 = \{r_4, r_5\}$ and $B_2 = \{r_3,r_8,r_{12}\}$. Notice that $r_4,r_5,r_1,r_2,r_7 \in spanB_1$ while $r_3,r_8,r_{12},r_{11},r_{10},r_{13} \in spanB_2$. On the other hand, $r_6,r_9 \notin spanB_1 \cup spanB_2$, i.e., $\mathscr R^* \nsubseteq spanB_1 \cup spanB_2$. We proceed to STEP 8.\\

\noindent STEP 8: Before transferring some elements of  $B_2$ to $B_1$, we point out that $\wp(B_2)$ has $2^3=8$ elements. Excluding $\emptyset$ and $B_2$ gives us 6 ways to rewrite $B_1$ and $B_2$. We can see from the table below that only the case where $B_1 = \{r_4,r_5,r_8\}$ and $B_2 = \{r_3,r_{12}\}$ induces a partition of $\mathscr R^*$ resulting to a decomposition of the given network. Since every other iteration of $B_1$ and $B_2$ fails to decompose the given network, this step of the algorithm will ultimately pick the case where $B_1 = \{r_4,r_5,r_8\}$ and $B_2 = \{r_3,r_{12}\}$.

\begin{table}[H]
\begin{tabular}{|c|c|c|}
\hline
$B_1$ & $B_2$ & element(s) not contained \\ 
      &       & in $spanB_1 \cup spanB_2$ \\ \hline
$\{r_4,r_5,r_3\}$ & $\{r_8,r_{12}\}$ & $r_6,r_9,r_{13}$ \\ \hline
$\{r_4,r_5,r_8\}$ & $\{r_3,r_{12}\}$ & (none) \\ \hline
$\{r_4,r_5,r_{12}\}$ & $\{r_3,r_8\}$ & $r_6,r_9,r_{13}$ \\ \hline
$\{r_4,r_5,r_3,r_8\}$ & $\{r_{12}\}$ & $r_{13}$ \\ \hline
$\{r_4,r_5,r_3,r_{12}\}$  & $\{r_8\}$ & $r_6,r_9,r_{10},r_{13}$ \\ \hline
$\{r_4,r_5,r_8,r_{12}\}$ & $\{r_3\}$ & $r_{11},r_{13}$ \\ \hline
\end{tabular}
\centering
\caption{Some iterations of $B_1$ and $B_2$}
\label{table2}
\end{table}

\noindent STEPS 9-10: Given below is the subnetwork $\mathscr N_1$ generated by the reactions whose vectors are contained in $spanB_1 = span\{r_4,r_5,r_8\}$. It has deficiency $\delta_1 = 4-1-3 = 0 \leq 1$.

\begin{equation}
    \begin{tikzcd}
                                                                    & M_2 \arrow[ld, "R_4"', shift right] \arrow[rd, "R_6"', shift right] \arrow[dd, "R_5"] &                                                                        \\
M_1 \arrow[ru, "R_1"', shift right] \arrow[rd, "R_2"', shift right] &                                                                                       & M_4 \arrow[ld, "R_{10}"', shift right] \arrow[lu, "R_9"', shift right] \\
                                                                    & M_3 \arrow[lu, "R_7"', shift right] \arrow[ru, "R_8"', shift right]                   &                                                                       
\end{tikzcd}
\end{equation}

\noindent STEPS 11-12: Notice that in the subsystem $(\mathscr N_1, K_1)$, the branching reactions $R_4$ and $R_5$ have different kinetic order vectors. This means that the subsystem is NOT PL-RDK. Unfortunately, $M_2$ and $M_3$ are both NDK nodes in the subsystem implying that it is also not minimally PL-NDK. No building block is found and the algorithm cannot determine if the system has ACR using the SF-pair $\{R_4,R_5\}$.\\

We note that testing every other SF-pair yields similar result, i.e., no decomposition gives an SF-type subnetwork that passes the building block tests. In short, the algorithm cannot determine if the given system has ACR. Nevetheless, though the result is not affirmative, this example also illustrates the capability of the algorithm (and the BCR building block algorithm as well) to extract and analyze every binary decomposition of the network. This feature of the algorithm can be seen as an opportunity for those who are studying decomposition theory.
 
\end{document}